\theoremstyle{plain}
\newtheorem{theorem}{Theorem}
\newtheorem{lemma}{Lemma}
\newtheorem{corollary}{Corollary}
\theoremstyle{definition}
\newtheorem{proposition}{Proposition}
\newtheorem{definition}{Definition}
\newtheorem{remark}{Remark}
\newtheorem{problem}{Problem}
\newlength\figureheight
\newlength\figurewidth
\newif\ifcomment
\newcommand{\comments}[1]{\textcolor{blue}{#1}}
\newcommand{\commentsr}[1]{\textcolor{red}{#1}}
\newcommand{\commentsr}[1]{}
\newcommand{\comments}[1]{}
\newif\ifshorten
\newcommand{\shorten}[1]{}
\newcommand{\shorten}[1]{#1}
\newif\iffinalcdc
\newcommand{\finalcdc}[1]{}
\newcommand{\finalcdc}[1]{#1}
\newcommand{\T}[1]{T_{\mathcal S^{(#1)}}}
\newcommand{\Dp}[1]{\mathbb D_p^{(#1)}}
\newcommand{\Do}[1]{\mathbb D_o^{(#1)}}
\newcommand{\invT}[1]{T^{-1}_{\mathcal S^{(#1)}}}
\begin{document}

\title{\LARGE \bf An Iterative Abstraction Algorithm for Reactive
Correct-by-Construction Controller Synthesis}

\author{
Robert Mattila, Yilin Mo and Richard M. Murray%
\thanks{This work is supported in part by IBM and UTC through the Industrial
Cyber-Physical Systems Center (iCyPhy) consortium.}%
\thanks{Robert Mattila is with the Department of Automatic Control, KTH Royal
Institute of Technology, Stockholm, Sweden. {\tt\small rmattila@kth.se}}%
\thanks{Yilin Mo and Richard M. Murray are with the Department of Control and
    Dynamical Systems, California Institute of Technology, Pasadena, CA, USA.
{\tt\small yilinmo@caltech.edu}, {\tt\small murray@cds.caltech.edu}}
}

\maketitle
\thispagestyle{empty}
\pagestyle{empty}

\begin{abstract} In this paper, we consider the problem of synthesizing
    correct-by-construction controllers for discrete-time dynamical systems.
    A commonly adopted approach in the literature is to abstract the dynamical
    system into a \emph{Finite Transition System} (FTS) and thus convert the problem
    into a two player game between the environment and the system on the FTS.
    The controller design problem can then be solved using synthesis tools for
    general linear temporal logic or generalized reactivity(1) specifications.
    In this article, we propose a new abstraction algorithm. Instead of
    generating a single FTS to represent the system, we generate two FTSs,
    which are under- and over-approximations of the original dynamical system.
    We further develop an iterative abstraction scheme by exploiting the
    concept of winning sets, i.e., the sets of states for which there exists
    a winning strategy for the system.  Finally, the efficiency of the new
    abstraction algorithm is illustrated by numerical examples.
\end{abstract}

\section{Introduction}

The systems that are considered for control purposes have changed fundamentally
over the last few decades. Driven by the advancements in computation and
communication technologies, the systems of today are highly complicated with
large amounts of components and interactions, which poses great challenges
to controller design. This is exemplified in \cite{darpa-car} where the
controller for an autonomous vehicle became so unwieldy that it was impossible
to foresee the failure of it, resulting in a crash.

In order to tame the complexity of modern control systems, synthesis of
correct-by-construction control logic based on temporal logic specifications
has gained considerable attention in the past few years. A commonly
adopted approach is to construct a \emph{Finite Transition System} (FTS) which
serves as a symbolic model of the original control system, which typically has
infinitely many states. The controller, which is represented by a finite state
machine, can then be synthesized to guarantee certain specifications on the
system by leveraging formal synthesis tools~\cite{Piterman2006}. Such a design
procedure has been applied to various fields including
robotics (e.g. \cite{Karaman09,KloetzerB10,Bhatia2010,KGF,Fainekos06}), autonomous
vehicle control~\cite{rhtlp}, smart-buildings~\cite{Raman14} and aircraft power
system design~\cite{Nuzzo13}.

One of the main challenges of this approach is in the abstraction of the
control system, whose state space is continuous and potentially high
dimensional, into a finite state model. Zamani et al.~\cite{zamani2012symbolic}
propose an abstraction algorithm based on approximate simulation relations and
alternating approximate simulation relations. They prove that if certain
continuity assumptions on the system trajectory hold, then an FTS can be
generated by partitioning the state space into small hypercubes. Similar ideas
are also presented in \cite{tabuada:approx} and \cite{tabuada:book}.

A different, iterative, approach has been proposed that first generates
a coarse model of the original system and then refines the model based on
reachability computations \cite{rhtlp, rhtlpfds}. This algorithm has been
implemented in a software package, namely \texttt{TuLiP}~\cite{tulip}, and will
be compared to the method proposed in this paper.

Most of the algorithms available in the literature generate the finite state
model independently of the system specifications. As such, the abstracted model
can be used for any possible specification. However, this typically leads to
a partition of the state space into equally fine regions everywhere. As
a consequence, the time complexity of such general abstraction procedures is
quite high and it increases with the dimension of the system.

In this article, in hope to reduce the computational complexity of the
abstraction algorithm, we create the finite state models of the system by
exploiting the structure of the specifications.  To be specific, we create two
FTS models for the control system, where one is an over-approximation of the
control system and the other is an under-approximation. By solving the
synthesis problem on both FTSs, we can categorize the points in the state space
into, what we refer to as, \emph{winning, losing} and \emph{maybe} sets.  Conceptually, the
winning set contains those points for which a correct controller is
known, i.e., roughly, a controller that can fulfill the given specifications.
On the other hand, the losing set contains those points for which we
know that no correct controller exists.  Lastly, the maybe set
represents the points for which the existence of a correct controller is
not yet known since the current model is not fine enough to represent
the original system. One can view the winning and losing sets as
the ``solved'' regions and the maybe set as the ``unsolved'' region. We
can thus focus our computational power on refining the abstraction of the
regions of the state space that lie in the maybe set, while
leaving the current winning and losing sets intact.

The merits of our proposed algorithm are twofold:
\begin{enumerate}
  \item Instead of partitioning the state space into equally fine regions, we
can concentrate the computational power on the regions for which the existence of
a correct controller is not yet known.
  \item Compared to the abstract algorithm proposed in~\cite{rhtlp, rhtlpfds,
      tulip}, for the case that the specifications are unrealizable (for the
      original continuous control system), our algorithm can provide
      proof that no correct controller exists.
\end{enumerate}

Ideas similar to our proposed method have been presented in
\cite{alfaro:solving} and \cite{moor:learning}. Our algorithm does however
allow us to skip some reachability calculations when performing the
refinement, and can as such be seen as an extension.

\comments{Somehow add references to \cite{camara:synthesis},
\cite{alfaro:solving} and \cite{moor:learning}.}

The rest of the paper is organized as follows: In
Section~\ref{sec:preliminary}, we provide an introduction to transition systems
and linear temporal logic. The problem of abstracting a discrete-time control
system into FTSs is proposed in Section~\ref{sec:problem}.
The abstraction algorithm is then discussed in Section~\ref{sec:main}. Two
numerical examples are provided in Section~\ref{sec:simulation} to illustrate
the effectiveness of the proposed algorithm. Finally,
Section~\ref{sec:conclusion} concludes the paper.

\section{Preliminaries}
\label{sec:preliminary}

\comments{Robert: Something is fishy with our definitions of dom(S), S and
$\mathcal{S}$.}

Most of the definitions in this section can  be found in~\cite{rhtlp}, but are
included in this section for the sake of completeness. For a more thorough
presentation, see e.g.~\cite{pomc}.

\subsection{Transition Systems and Linear Temporal Logic}

\commentsr{Review: Explain the rationale behind a system, i.e., what does it
    represent. You don't seem to introduce the concept of environment, but use
    it several times throughout the paper. Since the presence of an
    environment, and its interaction with the system are quite fundamental in
    the type of synthesis problem you consider, this should be explained and
defined.}

\comments{Robert: Other tulip-related papers dont appear to do this.}

\begin{definition}
  A \emph{system} consists of a set $V$ of variables. The \emph{domain} of $V$, denoted by $dom(V)$, is the set of valuations of $V$. A \emph{state} of the system is an element $v \in dom(V)$.
\end{definition}
In this paper, we consider a system with a set $V = S \cup \mathcal{E}$ of variables. The domain of $V$ is given by $dom(V) = dom(S) \times dom(\mathcal{E})$, where a state $\varsigma \in dom(S) $ is called the \emph{controlled state} and a state $e \in dom(\mathcal{E})$ the uncontrolled \emph{environmental state}. As a result, the state $v$ can be written as $(\varsigma,e)$. We further assume that the set $dom(\mathcal E)$ is finite.

\commentsr{Review:Does the system only consist of the variables, or also of their domains (or do variables already contain their domains then variables should be defined as such)?How are dynamics of the system defined?}

\comments{Robert: This definition is taken from another tulip-paper}

%An \emph{execution} of a system is an inifinite sequence of its states. For a discrete-time system, which is what we will treat in this paper, an execution can be characterized by $\sigma = v_0 v_1 v_2 \dots$, where $v_t \in dom(V)$ is the state at time instance $t$ of the system. To simplify analysis and controller synthesis, a system with states in some subset of $\mathbb{R}^n$ is usually abstracted to a discrete description, where discrete states contain continuous states that in some sense are equivalent. One way of doing this abstraction is by introducing the concept of
\begin{definition}
  A \emph{transition system} (TS) is a tuple $\mathbb{T} := (\mathcal{V}, \mathcal{V}_{init}, \rightarrow)$ where $\mathcal{V}\subseteq dom( V)$ is a set of states, $\mathcal{V}_{init} \subseteq \mathcal{V}$ is a set of initial states and $\rightarrow \subseteq \mathcal{V} \times \mathcal{V}$ is a transition relation. Given states $\nu_i, \nu_j \in \mathcal{V}$, we write $\nu_i \rightarrow \nu_j$ if there is a transition from $\nu_i$ to $\nu_j$ in $\mathbb{T}$. We say that $\mathbb{T}$ is a \emph{finite transition system} (FTS) if $\mathcal{V}$ is finite.
\end{definition}

\begin{definition}
  An \emph{atomic proposition} is a statement on system variables $\nu$ that has a unique truth value for a given value of $\nu$. Letting $\nu \in dom(V)$ and $p$ be an atomic proposition, we write $\nu \models p$ if $p$ is true at the state $\nu$.
\end{definition}

We will use \emph{Linear Temporal Logic} (LTL), which is an extension of regular propositional logic that introduces additional temporal operators, to formulate specifications on a system. In particular, apart from the standard logical operators negation ($\neg$), disjunction ($\vee$), conjunction ($\wedge$) and implication ($\Rightarrow$), it includes the temporal operators next ($\bigcirc$), always ($\square$), eventually ($\lozenge$) and until ($\mathcal{U}$). LTL formulas are defined inductively as
\begin{enumerate}
  \item Any atomic proposition $p$ is an LTL formula.
  \item Given the LTL formulas $\varphi$ and $\psi$; $\neg \varphi$, $\varphi
\vee \psi$, $\bigcirc \varphi$ and $\varphi \; \mathcal{U} \;  \psi$ are LTL
formulas as well.
\end{enumerate}

\begin{definition}
\commentsr{Review: \emph{execution} has not been defined}
The \emph{satisfaction relation} $\models$ between an execution (infinite
sequence of system states) $\sigma = \nu_0\nu_1\dots$ and an LTL formula is defined inductively as
 \begin{itemize}
   \item $\sigma\models p$ if $\nu_0\models p$.
   \item $\sigma\models \neg \varphi$ if $\sigma$ does not satisfy $\varphi$.
   \item $\sigma\models \varphi\vee \psi$ if $\sigma\models \varphi$ or $\sigma\models \psi$.
   \item $\sigma\models \bigcirc \varphi$ if $\nu_1\nu_2\dots\models \varphi$.
   \item $\sigma\models \varphi\;\mathcal{U}\; \psi$ if there exists an $i\geq 0$, such that $\nu_i\nu_{i+1}\dots\models \psi$ and for any $0\leq k < i$, $\nu_k\nu_{k+1}\dots\models\varphi$.
 \end{itemize}
\end{definition}
\finalcdc{For a more in-depth explanation of LTL, see~\cite{pomc}.}

\iffinalcdc
It is well known that the complexity of synthesizing a controller for a general
LTL formula is double exponential in the length of the given
specification~\cite{Pnueli1989}. However, for a specific class of LTL formulas,
namely those known as \emph{Generalized Reactivity(1)} (GR1) formulas, an
efficient polynomial time algorithm~\cite{Piterman2006} exists. As a result, in
this article, we will restrict the specification $\varphi$ to be a GR1 formula,
which takes the following form: $\varphi = \bigwedge_{i=1}^M \square\lozenge
p_i \Longrightarrow\bigwedge_{j=1}^N \square\lozenge q_j$,
where each $p_i,q_j$ is a Boolean combination of atomic propositions.
\else
It is well known that the complexity of synthesizing a controller for a general
LTL formula is double exponential in the length of the given
specification~\cite{Pnueli1989}. However, for a specific class of LTL formulas,
namely those known as \emph{Generalized Reactivity(1)} (GR1) formulas, an
efficient polynomial time algorithm~\cite{Piterman2006} exists. As a result, in
this article, we will restrict the specification $\varphi$ to be a GR1 formula,
which takes the following form:
\begin{equation}
  \varphi = \bigwedge_{i=1}^M \square\lozenge p_i \Longrightarrow\bigwedge_{j=1}^N \square\lozenge q_j,
  \label{eq:gr1}
\end{equation}
where each $p_i,q_j$ is a Boolean combination of atomic propositions.
\fi

\comments{Robert: Note that I changed $\rightarrow$ to $\Longrightarrow$ in (1), is that correct? What are N and M in (1)?}

\comments{Yilin: I think $\Longrightarrow$ is fine. $N$ and $M$ are the number of assumptions and guarantees respectively.}

\commentsr{Review: Note that the synthesis for GR1 is still exponential in the size of the variable domains, which should probably be highlighted here: as far as I understand your algorithm, the complexity of synthesis is exponential in the number of partitions you generate.  And regarding (1): The definition of GR1 formulae in $[8]$ includes safety parts and initial parts. This encodes your model, but this should be mentioned. Moreover, you don't mention the distinction between environment and system variables here, and that assumptions and guarantees have asymmetric requirements on the variables used. This is due to the turn-based execution assumed in the synthesis algorithm (i.e. environment moves first, then system reacts), and should be highlighted in your text.}

\comments{Robert: Uh..}

\comments{Yilin: I think the complexity of GR1 synthesis is not our concern for this paper. So no need to address that. Regarding the comment on asymmetry, could you check the reference to see if they have any assumptions on the atomic proposition $p_i$ and $q_j$? (for example, $p_i$ consists only env variables and $q_j$ consist only system variables)}

\subsection{Winning Controllers and Winning Sets}

\begin{definition}
  \label{def:controller}
  A \emph{controller} for a transition system $(\mathcal{V},
  \mathcal{V}_\text{init},$ $ \rightarrow)$ and environment $\mathcal{E}$ is an
  ordered set of mappings $\gamma_t:\mathcal{S} \times \mathcal{E}^t
  \rightarrow \mathcal{S}$, i.e.,
    $\gamma \triangleq ( \gamma_1, \gamma_2, \dots, \gamma_t, \dots )$,
  each taking the initial controlled state $\varsigma[0]$ and all the environmental actions up to time $t-1$, $e[0] \dots e[t-1]$, giving another state in $\mathcal{S}$ as output. Furthermore, a controller $\gamma$ is called \emph{consistent} if for all $t$ and $\varsigma[0], \, e[0], \, \dots, \,$ $e[t+1]$, the following transition relation is satisfied:
    $(\gamma_{t}(\varsigma[0], e[0], \dots, e[t-1]),e[t]) 
    \rightarrow (\gamma_{t+1}(\varsigma[0], e[0], \dots, e[t]),e[t+1])$.
\end{definition}
\commentsr{Review:Presumably the index t goes to infinity with time, so how do you represent such controllers? You mention in Remark 1 that finite memory is sufficient, but never say what finite memory means, and how it gives rise to finitely-representable controllers from your Definition 5. Please also introduce bracket notation $[.]$ somewhere.}
\comments{Robert: Skip this?}
\comments{Yilin: Yes. I think our answer is fine since we refer to another paper. If he wants, he can check that paper to see what does finite memory mean. On the other hand, we use subscript to indicate time when we define execution and $[.]$ here. It is better to be consistent and hence we should change the definite of execution to $\nu[0]\nu[1]\dots$. The subscript in the definition of controller is fine. }

\begin{definition}
  Given an infinite sequence of environmental states $e[0]e[1]\dots$, a \emph{controlled execution} $\sigma$ using the controller $\gamma$ and starting at $\varsigma[0]$ is an infinite sequence  
   $\sigma = \nu_0\nu_1\dots =  (\varsigma[0], e[0])(\varsigma[1], e[1])\dots$,
  such that $\varsigma[t+1] = \gamma_t(\varsigma[0], e[0], \dots, e[t+1]).$
  \label{def:contr_exec}
\end{definition}
\begin{definition}
  %The controller $\gamma$ is \emph{winning} at the initial state $\varsigma[0]$ if for any infinite sequence of $e[0]e[1]\dots$, the controlled execution with $\gamma$ satisfies $\varphi$.
  A set of controlled states $\mathcal W$ is \emph{winning} if there exists
a consistent controller $\gamma$, such that for any infinite sequence of
$e[0]e[1]\dots$ and any initial controlled state $\varsigma[0]\in \mathcal W$,
the controlled execution $\sigma$ using controller $\gamma$ starting at
$\varsigma[0]$ satisfies the GR1-specification $\varphi$. The corresponding controller $\gamma$ is
called a \emph{winning controller} for $\mathcal W$.
  \label{def:win_contr_at}
\end{definition}

%\begin{definition}
%  A \emph{winning controller} is a controller that is winning for $\mathcal V_{init}$. %at every $\varsigma \in \mathcal{S}_\text{init}$.
%  \label{def:win_contr}
%\end{definition}

The following observations are important for the rest of the paper:
\begin{proposition}
  Let $\{\mathcal W_i\}_{i\in \mathcal I}$ be a collection of winning sets, then the set $\bigcup_{i\in \mathcal I} \mathcal W_i$ is also winning.
  \label{prop:union_of_W}
\end{proposition}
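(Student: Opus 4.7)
The plan is to construct a single consistent winning controller for $\bigcup_{i\in\mathcal I}\mathcal W_i$ by stitching together the individual winning controllers provided by the hypothesis. For each $i\in\mathcal I$, let $\gamma^{(i)} = (\gamma^{(i)}_1, \gamma^{(i)}_2, \dots)$ denote a winning controller for $\mathcal W_i$, which exists by Definition~\ref{def:win_contr_at}. The key observation is that, by Definition~\ref{def:controller}, any controller has access to the initial controlled state $\varsigma[0]$ at every time step, so the selection of which $\gamma^{(i)}$ to use can depend on $\varsigma[0]$.

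Concretely, I would fix a choice function $\iota : \bigcup_{i\in\mathcal I}\mathcal W_i \to \mathcal I$ such that $\varsigma \in \mathcal W_{\iota(\varsigma)}$ for every $\varsigma$ in the union, and then define the combined controller $\gamma = (\gamma_1, \gamma_2, \dots)$ by
\[
\gamma_t(\varsigma[0], e[0], \dots, e[t-1]) \triangleq \gamma^{(\iota(\varsigma[0]))}_t(\varsigma[0], e[0], \dots, e[t-1]).
\]
Since the index $\iota(\varsigma[0])$ is determined entirely by $\varsigma[0]$ and therefore does not change over time, every transition produced by $\gamma$ along an execution starting at $\varsigma[0]$ coincides with a transition produced by the single controller $\gamma^{(\iota(\varsigma[0]))}$. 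Consistency of $\gamma$ (in the sense of Definition~\ref{def:controller}) therefore follows immediately from the consistency of $\gamma^{(\iota(\varsigma[0]))}$.

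Finally, to verify the winning property, I would pick an arbitrary initial state $\varsigma[0] \in \bigcup_{i\in\mathcal I}\mathcal W_i$ and an arbitrary environment sequence $e[0]e[1]\dots$. Let $i^\star = \iota(\varsigma[0])$, so that $\varsigma[0] \in \mathcal W_{i^\star}$. The controlled execution $\sigma$ produced by $\gamma$ from $\varsigma[0]$ under this environment sequence is identical to the controlled execution produced by $\gamma^{(i^\star)}$ from the same $\varsigma[0]$ and environment sequence. Because $\gamma^{(i^\star)}$ is a winning controller for $\mathcal W_{i^\star}$, this execution satisfies the GR1 specification $\varphi$. Hence $\gamma$ is a winning controller for $\bigcup_{i\in\mathcal I}\mathcal W_i$, which is therefore winning.

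There is essentially no technical obstacle here; the proof is a direct construction. The only subtle points to flag are (i) that invoking a choice function $\iota$ is legitimate even when $\mathcal I$ is uncountable, and (ii) that the index $i^\star$ must not be allowed to vary with time, because otherwise the stitched controller would not be guaranteed to produce an execution matching any single $\gamma^{(i)}$, and consistency would fail. Fixing $i^\star$ on the basis of $\varsigma[0]$ alone circumvents both issues.
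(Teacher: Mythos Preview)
Your proposal is correct and follows essentially the same approach as the paper: the paper also defines an index (choice) function $h:\bigcup_{i\in\mathcal I}\mathcal W_i\to\mathcal I$ with $\varsigma\in\mathcal W_{h(\varsigma)}$ and sets $\gamma_t(\varsigma[0],e[0],\dots,e[t-1]) = \gamma_t^{(h(\varsigma[0]))}(\varsigma[0],e[0],\dots,e[t-1])$, then asserts that this is winning. Your write-up is in fact more explicit than the paper's, since you spell out why consistency and the winning property transfer, whereas the paper simply says ``it is easily verified.''
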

\comments{Robert: I shifted the proof of this to appendix.}
As a result, there exists a largest winning set, which leads to the following definition:
\begin{definition}
  The \emph{largest winning set}, $W$, of a transition system $\mathbb T$, for the
specification $\varphi$, is defined as the union of all winning sets, i.e.,
  \begin{equation}
    W(\mathbb T,\varphi) = \bigcup_{\mathcal W\text{ is winning}}\mathcal W.
  \end{equation}
  The \emph{losing set}, $L$, is defined as 
  \begin{equation}
    L(\mathbb T,\varphi) = dom(S) \; \backslash \; W(\mathbb T,\varphi).
  \end{equation}
 A state $\varsigma$ is called a \emph{losing state} if $\varsigma \in L(\mathbb T,\varphi)$.
  %The \emph{winning set} of a system is the set of states at which there exists a winning controller. We define the mapping $W(\mathbb{D}, \varphi)$ as giving the winning set of the transition system $\mathbb{D}$ with respect to the specifications $\varphi$. We let $W(S, \varphi) \subseteq dom(S)$ be the corresponding set in the continuous case.
  \label{def:largestwin}
\end{definition}
\begin{remark}
  Notice that the controllers defined in Definition~\ref{def:controller} have infinite memory (since they require all environmental actions $e[0]e[1]\dots$).  However, from \cite{Piterman2006}, we know that for a \emph{finite} transition system, if a winning controller exists, there will also exist a winning controller with finite memory.
\end{remark}
\section{Problem Formulation}
\label{sec:problem}
%The controlled state $s$ is the state of the plant which we consider as a sampled continuous system which we can influence by control input. That is, the system state is continuous but time evolves in discrete steps. For $t \in \{0, 1, 2, \dots \}$ we consider the following dynamical model of the plant,
\commentsr{Review: Define \emph{control system} before you use it!}

\comments{Robert: This is our definition of a control system...? Should I add something like: A control system is a dynamical system where the dynamics are not completely autonomous (..)? That just sounds silly.}
\comments{Yilin: You do not need to change the text.}

We consider the following discrete-time control system:
\begin{equation}
  \begin{split}
    s[t+1] &= f(s[t], u[t]), \\
    u[t] &\in U,\, s[t] \in dom(S), \\
    s[0] &\in S_\text{init},
  \end{split}
  \label{eq:system_model}
\end{equation}
where $dom(S)\subseteq \mathbb R^n$, $S_\text{init} \subseteq dom(S)$ is the
set of possible initial states, $U \subseteq \mathbb{R}^m$ is the admissible
control set and $f$ the system dynamics (possibly non-linear). It is evident
that the discrete-time control system is completely characterized by
$f,\,U,\,dom(S)$ and $S_\text{init}$, which leads to the following formal definition:
\begin{definition}
  A \emph{discrete-time control system} $\Sigma$ is a quadruple $\Sigma \triangleq (f,\,U,\,dom(S),\,S_\text{init})$.
\end{definition}

\commentsr{Review: Where does $\mathcal{E}$ get quantified in the definition? It is also not mentioned in (4).}

\comments{Robert: It is not defined more closely in any other tulip (conference) paper. Also, the environment is not relevant for the dynamical model of the system. So skip this comment?}

\comments{Yilin: I think we should say that we assume the ``controlled'' states evolves according to a discrete-time control system. Then maybe give an example, like the one in [19], where the controlled states is the position of the autonomous vehicle and the env variable represent random obstacles. The goal is to avoid collision with the obstacle and visit some places. Therefore, the env is not in the dynamic equation (4) but in the specifications. My description may not be accurate. Please check the example section in [19].}

A discrete-time control system $\Sigma$ can be converted into a transition system in the following manner:
\begin{definition}
  Let $\Sigma \triangleq (f,\,U,\,dom(S),\,S_\text{init})$ be a discrete-time control system. The transition system $TS(\Sigma) = (\mathcal V, \mathcal V_{init},$ $\rightarrow)$ associated with $\Sigma$ is defined as:
  \begin{itemize}
    \item $\mathcal V = dom(S)\times dom(\mathcal E)$.
    \item $\mathcal V_{init} = S_\text{init}\times dom(\mathcal E)$.
    \item For any $(s_1,e_1),(s_2,e_2)\in \mathcal{V}$, $(s_1,e_1) \rightarrow (s_2,e_2)$ if and only if there exists $u\in U$, such that $s_2 = f(s_1,u)$.
  \end{itemize}
\end{definition}

The problem of controller synthesis for the discrete-time control system $\Sigma$ can be written as a controller synthesis problem for $TS(\Sigma)$ as follows:
\begin{problem}
  \emph{Realizability:} Given $TS(\Sigma)$ and a specification $\varphi$, decide whether $S_{init}$ is a winning set.
  \label{problem:realizability}
\end{problem}
\begin{problem}
  \emph{Synthesis:} Given $TS(\Sigma)$ and a specification $\varphi$, if $S_{init}$ is winning, construct the winning controller $\gamma$.
  \label{problem:synthesis}
\end{problem}
In general, Problem~\ref{problem:realizability} and \ref{problem:synthesis} are
very challenging, even for a very simple formula $\varphi$~\cite{VidEtal2000,
RakEtal2006}. As a result, we will attack this problem by leveraging the tools
developed for controller synthesis for FTSs. The main difficulty in directly
applying these techniques is that $TS(\Sigma)$ has infinitely (uncountably)
many states. In the next section, we develop abstraction techniques to convert
$TS(\Sigma)$ into FTSs.

\section{Abstraction Algorithm}
\label{sec:main}
In this section, we abstract $TS(\Sigma)$ into two FTSs with the same set of states by partitioning the state space into equivalence classes. We will refer to $s \in dom(S)$ as a \emph{continuous state} for $TS(\Sigma)$ and any state $\varsigma$ of the FTSs as a \emph{discrete state}. % \comments{ Robert: I like this distinction between $s$ and $\varsigma$. But in the Preliminaries-section, we mix them. I think we should change it so that we stick to just one of them ($s$ or $\varsigma$), until we introduce the difference between them?}

%There are two aspects of controller synthesis. The first is to answer the question of existence, i.e. \emph{is there a controller that can fulfill the specifications?}. The second is to actually find a controller if the answer is positive. We will provide a definite answer to the first and suggest a way of more efficiently finding the answer to the second. 

\subsection{Constructing the Initial Transition Systems}

Our proposed method builds upon the idea of creating an over-approximation and an under-approximation of the reachability relations of the system. %In this way, we get an under and over approximation of the winning set of the continuous system and we can focus our computational attention on areas that will shrink the approximation towards the real winning set. 
To this end, we (iteratively) construct two FTSs. One that we will refer to as the
\emph{pessimistic FTS} and one that we will refer to as the \emph{optimistic
FTS}. We introduce the notation $\mathbb{D}_{o}^{(i)} = (\mathcal{V}^{(i)},
\mathcal{V}_{\text{init}}^{(i)}, \rightarrow_o^{(i)})$ and
$\mathbb{D}_{p}^{(i)} = (\mathcal{V}^{(i)}, \mathcal{V}_{\text{init}}^{(i)},
\rightarrow_p^{(i)})$, respectively, for the $i$th iteration of these FTSs
(i.e. those constructed in the $i$th iteration of the algorithm).

To simplify the notation, we define two reachability relations as:
%\comments{Yilin: I reword it a little to make it consistent with the statement in the proof}
\begin{definition}
  The relation $\mathcal{R}_p: 2^{dom(S)} \times 2^{dom(S)} \rightarrow \{0, 1\}$ is defined such that $\mathcal{R}_p (X, Y) = 1$ if and only if for all $x\in X$, there exists an $y\in Y$ and $u\in U$, such that $f(x,u) = y$.
\end{definition}
\begin{definition}
  The relation $\mathcal{R}_o: 2^{dom(S)} \times 2^{dom(S)} \rightarrow \{0, 1\}$ is defined such that $\mathcal{R}_o (X, Y) = 1$ if and only if there exist $x \in X$, $y\in Y$ and $u\in U$, such that $f(x,u)=y$.
\end{definition}
\shorten{
\begin{remark}
  Informally, $\mathcal{R}_p$ indicates whether there is some control action for every continuous state in a region $X$ that takes that state to some state in the region $Y$ in one time step. $\mathcal{R}_o$ indicates whether there is some point in $X$ that can be controlled to $Y$ in one time step. The results can be generalized to longer horizon lengths, but for simplicity we only consider reachability in one time step.
\end{remark}
}
We further define a partition function of the continuous state space $dom(S)$:
\begin{definition}
  A \emph{partition function} of $dom(S)$ is a mapping $T_{\mathcal S}:
dom(S)\rightarrow \mathcal S$. The inverse of $T_{\mathcal S}$ is defined as $T^{-1}_{\mathcal S}:\mathcal S\rightarrow 2^{dom(S)}$, such that
  \begin{displaymath}
    T^{-1}_{\mathcal S}(\varsigma) = \{s\in dom(S):T_{\mathcal S}(s) = \varsigma\}.
  \end{displaymath}
\end{definition}
\begin{definition}
  The partition function $T_{\mathcal S}$ on $dom(S)$ is called
\emph{proposition preserving} if for any atomic proposition $p$ and any pair of continuous states $s_a,s_b\in dom(S)$, which satisfy $T_{\mathcal S}(s_a) = T_{\mathcal S}(s_b)$, we have that $s_a\models p$ implies that $s_b\models p$.
\end{definition}

If $T_{\mathcal S}$ is proposition preserving, then we can label the discrete states with atomic propositions. To be specific, we say $\varsigma \models p$ if and only if for every $s\in T^{-1}_{\mathcal S}(\varsigma)$, we have that $s\models p$. 

To initialize the abstraction algorithm, we assume that we are given the atomic
propositions on the continuous state space $dom(S)$. We can then create
a proposition preserving partition function $\T{0}$, a set of discrete states
$\mathcal{S}^{(0)} = \{ \varsigma_0, \varsigma_1, \dots,$ $ \varsigma_n \}$,
and a set of initial discrete states $\mathcal{S}_\text{init}^{(0)} \subseteq
\mathcal{S}^{(0)}$. The state space $\mathcal V^{(0)}$ and the initial state
$\mathcal V^{(0)}_{init}$ are defined as $\mathcal V^{(0)} = \mathcal
S^{(0)}\times dom(\mathcal E)$ and $\mathcal V^{(0)}_{init} = \mathcal
S^{(0)}_{init}\times dom(\mathcal E)$.
%We define a mapping
%\begin{equation}
%  T_{\mathcal{S}^{(0)}} : dom(S) \rightarrow \mathcal{S}^{(0)},
%  \label{eq:T_base_case}
%\end{equation}
%sending a continuous state to its corresponding discrete state in this initial partition of the state space. The inverse mapping, $T^{-1}_{\mathcal{S}^{(0)}} : \mathcal{S}^{(0)} \rightarrow 2^{dom(S)}$, takes a discrete state and gives back the set of all continuous states in that discrete state.

%To create the initial optimistic and pessimistic FTSs, $\mathbb{D}^{(0)}_o$ and $\mathbb{D}^{(0)}_p$, % we start from this initial coarse partition $\T{0}$ of $dom(S)$.% By a proposition preserving partition, we mean that for any atomic proposition $p$ on variables from $V$ and continuous states $s_a, s_b$ that belong to the same discrete state, we have that $s_a$ satisfies $p$ if and only if $s_b$ satisfies $p$. 

Next, we perform a reachability analysis to establish the transition relations in $\mathbb{D}_{o}^{(0)}$ and $\mathbb{D}_{p}^{(0)}$. For every pair of states, $\nu_a = (\varsigma_a, e_a)$, $\nu_b = (\varsigma_b,e_a)$, we add a transition in $\mathbb{D}^{(0)}_p$ from $\nu_a$ to $\nu_b$ if and only if $\mathcal{R}_p(T^{-1}_{\mathcal{S}^{(0)}}(\varsigma_a), T^{-1}_{\mathcal{S}^{(0)}}(\varsigma_b)) = 1$ and a transition in $\mathbb{D}^{(0)}_o$ if and only if $\mathcal{R}_o(T^{-1}_{\mathcal{S}^{(0)}}(\varsigma_a), T^{-1}_{\mathcal{S}^{(0)}}(\varsigma_b)) = 1$.
\shorten{
\begin{remark}
  $\mathbb{D}_{o}^{(0)}$ is optimistic in the sense that even if only some part of a region corresponding to a discrete state can reach another, we consider there to be a transition between these two discrete states. In $\mathbb{D}_p^{(0)}$ we require every point in a region corresponding to a discrete state to be able to reach to some point in the other for there to be a transition.
\end{remark}
}

\finalcdc{
The idea is illustrated in Figure \ref{fig:construction_of_fts}. Given an
initial proposition preserving partition of the continuous state space (the
colored quardrants), the two FTSs can be constructed using a reachability
analysis. An arrow from a region separated by a solid or dashed line to another
region means that there is some control action taking the system from the first
region to the other. For simplicity, we assume that the environment does not
have any variables.  

\begin{figure}[hbt!]
  \centering
  \begin{tikzpicture}[scale=0.65]
    %\draw[gray, very thin] (0,-0) grid (6,6);

    % Pessimistic
    % Draw states
    \draw (0, 0.5) circle (0.3);
    \draw (0, 2.5) circle (0.3);
    \draw (2, 0.5) circle (0.3);
    \draw (2, 2.5) circle (0.3);

    % Draw transitions
    \draw [<-, thick] (0, 0.8) -- (0, 2.2);
    \draw [->, thick] (2, 2.2) -- (2, 0.8);

    % Annotate
    \node at (0, 2.5) {$\varsigma_2$};
    \node at (2, 0.5) {$\varsigma_4$};
    \node at (0, 0.5) {$\varsigma_3$};
    \node at (2, 2.5) {$\varsigma_1$};

    \node at (1, 3.3) {$\mathbb{D}_p^{(0)}$};

    \node at (3.5, 1.5) {and};

    % Optimistic
    % Draw states
    \draw (5, 0.5) circle (0.3);
    \draw (5, 2.5) circle (0.3);
    \draw (7, 0.5) circle (0.3);
    \draw (7, 2.5) circle (0.3);

    % Draw transitions
    \draw [<-, thick] (5.3, 0.5) -- (6.7, 0.5);
    \draw [<-, thick] (5, 0.8) -- (5, 2.2);
    \draw [->, thick] (5.3, 2.5) -- (6.7, 2.5);
    \draw [->, thick] (7, 2.2) -- (7, 0.8);

    % Annotate
    \node at (5, 2.5) {$\varsigma_2$};
    \node at (7, 0.5) {$\varsigma_4$};
    \node at (5, 0.5) {$\varsigma_3$};
    \node at (7, 2.5) {$\varsigma_1$};

    \node at (6, 3.3) {$\mathbb{D}_o^{(0)}$};

    % Draw continuous state space
    %\draw[gray, very thin] (-6,-0) grid (6,6);

    % Fill the quadrants
    \fill [pattern=crosshatch, pattern color=green] (-4, 1.5) rectangle (-2.5, 3);
    \fill [pattern=checkerboard, pattern color=black!20] (-4, 1.5) rectangle (-2.5, 0);
    \fill [pattern=north west lines, pattern color=magenta] (-4, 1.5) rectangle (-5.5, 0);
    \fill [pattern=crosshatch dots, pattern color=cyan] (-4, 1.5) rectangle (-5.5, 3);

    % Draw partitions and transitions

    % 1st quadrant
    \draw [thick, ->] (-3.2, 2.25) -- (-3.2, 1);

    % 2nd quadrant
    \draw [thick, dashed] (-4, 1.5) -- (-5.5, 3);
    \draw [thick, ->] (-5.25, 2.25) -- (-5.25, 1);
    \draw [thick, ->] (-4.25, 2.25) -- (-4.25, 1);
    \draw [thick, ->] (-4.5, 2.5) -- (-3.75, 2.5); 

    % 4th quadrant
    \draw [thick, dashed] (-3.5, 0.5) circle (0.25);
    \draw [thick, ->] (-3.5, 0.5) -- (-4.5, 0.5);

    % Draw coordinate system
    \draw [-, thick] (-4, 0) -- (-4, 3);% node [right] {$y$};
    \draw [-, thick] (-5.5, 1.5) -- (-2.5, 1.5);% node [below right] {$x$};

    % Annotate
    \node at (-1.2, 1.5) {$\Longrightarrow$};

  \end{tikzpicture}
  \caption{Construction of $\mathbb{D}_p^{(0)}$ and $\mathbb{D}_o^{(0)}$ given
an initial proposition preserving partition of the state space (the four
colored quadrants) and a reachability analysis (illustrated by the lines and arrows in the state space). For simplicity, the environment is assumed to have no variables.}
  \label{fig:construction_of_fts}
\end{figure}
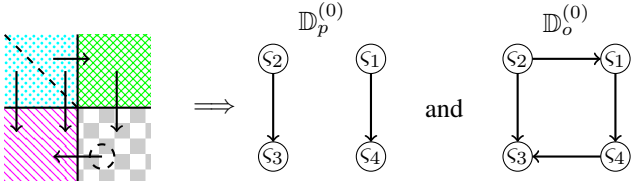
}

We now provide two theorems regarding the (largest) winning sets of
$\mathbb{D}^{(0)}_p$, $\mathbb{D}^{(0)}_o$ and $TS(\Sigma)$, the proofs of
which are reported after the statements of the theorems for the sake of legibility. 
\begin{theorem}
 % If there is a winning controller for $\mathbb{D}_{p}^{(0)}$, $\gamma_p = (\gamma_{p, 1}, \gamma_{p, 2}, \dots, \gamma_{p, t}, \dots)$, then there is a winning controller for the continuous system.
 For any discrete state $\varsigma[0] \in W(\mathbb D^{(0)}_p,\varphi)$ that is winning for the pessimistic FTS $\Dp{0}$, the corresponding continuous state is also winning in $TS(\Sigma)$, i.e., $\invT{0}(\varsigma[0])$ $\subseteq W(TS(\Sigma),\varphi)$.
  \label{thrm:pes2cont0}
\end{theorem}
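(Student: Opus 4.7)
The plan is to lift a winning controller on the pessimistic FTS to a winning controller on the underlying continuous system. The crucial feature of $\Dp{0}$ is the \emph{universal} quantifier baked into $\mathcal{R}_p$: a transition $\varsigma_a \to_p \varsigma_b$ is present only when \emph{every} continuous state in $\invT{0}(\varsigma_a)$ has \emph{some} control $u\in U$ that drives it into $\invT{0}(\varsigma_b)$. This means any discrete transition prescribed by the pessimistic controller can always be realized at the continuous level, no matter where inside the current cell the continuous state happens to be.

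Concretely, I would fix $\varsigma[0] \in W(\Dp{0},\varphi)$ and let $\gamma^p$ be a winning controller for $\varsigma[0]$ in $\Dp{0}$. For an arbitrary $s[0]\in \invT{0}(\varsigma[0])$ and an arbitrary environment sequence $e[0]e[1]\dots$, I would construct a continuous control sequence $u[0]u[1]\dots$ by induction on $t$. At each step $t$, assume I have produced continuous states $s[0],\dots,s[t]$ and discrete states $\varsigma[0],\dots,\varsigma[t]$ with $s[k] \in \invT{0}(\varsigma[k])$, and that the induced discrete execution in $\Dp{0}$ agrees with $\gamma^p$. Setting $\varsigma[t+1] \triangleq \gamma^p_t(\varsigma[0], e[0],\dots,e[t-1])$, consistency of $\gamma^p$ gives $\varsigma[t] \to_p \varsigma[t+1]$, so by definition of $\mathcal{R}_p$ there exists some $u[t]\in U$ with $f(s[t],u[t]) \in \invT{0}(\varsigma[t+1])$. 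Pick any such $u[t]$ and set $s[t+1] \triangleq f(s[t],u[t])$; this preserves the induction hypothesis and defines a consistent continuous controller $\gamma$ in the sense of Definition~\ref{def:controller}.

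It remains to verify that the resulting continuous execution $\sigma_c = (s[0],e[0])(s[1],e[1])\dots$ satisfies $\varphi$. By construction, the associated discrete execution $\sigma_d = (\varsigma[0],e[0])(\varsigma[1],e[1])\dots$ is a controlled execution of $\Dp{0}$ under the winning controller $\gamma^p$, so $\sigma_d \models \varphi$. Because $\T{0}$ is proposition preserving, $s[t]$ and $\varsigma[t]$ agree on every atomic proposition for every $t$, and the environment components are identical. A short induction on the structure of LTL formulas (over all suffixes) then shows $\sigma_c \models \psi \iff \sigma_d \models \psi$ for every subformula $\psi$ of $\varphi$, and in particular $\sigma_c \models \varphi$. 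Since $s[0]$ and the environment sequence were arbitrary, $\invT{0}(\varsigma[0])$ is a winning set for $TS(\Sigma)$, and hence contained in $W(TS(\Sigma),\varphi)$ by Definition~\ref{def:largestwin}.

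The only delicate point is the inductive construction of $\gamma$: Definition~\ref{def:controller} requires the continuous controller to be a function of $s[0]$ and the environment history alone, so one must observe that the discrete trajectory $\varsigma[0],\varsigma[1],\dots$ is itself determined by these data (via $T_{\mathcal S^{(0)}}$ applied to $s[0]$ and $\gamma^p$), and that the choice of $u[t]$ can be made measurably or simply via an arbitrary selection map from the nonempty set guaranteed by $\mathcal{R}_p$. Everything else is bookkeeping.
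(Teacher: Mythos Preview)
Your argument is essentially the paper's own proof: fix a winning discrete controller, use the universal quantifier in $\mathcal{R}_p$ to realize each prescribed discrete transition by a continuous control, and conclude via proposition preservation that the induced continuous execution inherits satisfaction of $\varphi$. If anything you are more explicit than the paper about the last step (the paper simply notes $\T{0}(s[t])=\varsigma[t]$ and declares $\gamma$ winning), and your remark about the selection map defining a bona fide controller in the sense of Definition~\ref{def:controller} is a point the paper leaves implicit.
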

%\begin{remark}
%  This means that $\forall s$ in the winning set of the continuous system, $T_{\mathcal{S}^{(0)}}( s ) \in \mathcal{W}^{(0)}$.
%\end{remark}
\begin{theorem}
 % If there is a winning controller $g$ for the continuous system, then there is a winning controller $\gamma_o = ( \gamma_{o, 1}, \gamma_{o, 2}, \dots, \gamma_{o, t}, \dots )$ for the optimistic transition system $\mathbb{D}_{o}^{(0)}$.
 For any continuous state $s[0] \in W(TS(\Sigma),\varphi)$ that is winning for $TS(\Sigma)$, the corresponding discrete state is also winning in $\mathbb D^{(0)}_o$, i.e., $\T{0}(s[0])\in W(\mathbb D^{(0)}_o,\varphi)$.
  \label{thrm:cont2opt0}
\end{theorem}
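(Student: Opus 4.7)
The plan is to lift a winning continuous controller for $TS(\Sigma)$ starting at $s[0]$ into a discrete winning controller for $\mathbb{D}_o^{(0)}$ starting at $\T{0}(s[0])$, by composing with the partition function. Because the optimistic transition relation $\rightarrow_o^{(0)}$ is an over-approximation of one-step reachability (it adds an edge as soon as \emph{any} witness pair of points realizes the transition), every step actually taken by the continuous controller maps to a legal edge of $\mathbb{D}_o^{(0)}$. Combined with the proposition-preserving property of $\T{0}$, the resulting discrete execution satisfies the same GR1 formula $\varphi$.

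Concretely, fix $s[0]\in W(TS(\Sigma),\varphi)$ and let $\gamma$ be a winning controller in $TS(\Sigma)$ that guarantees $\varphi$ from $s[0]$ against every environment sequence. Define
\[
\tilde\gamma_t\bigl(\T{0}(s[0]), e[0], \dots, e[t-1]\bigr) \; \triangleq \; \T{0}\!\bigl(\gamma_t(s[0], e[0], \dots, e[t-1])\bigr),
\]
so that $\tilde\gamma$ is simply the projection of the continuous controller onto discrete cells. The first thing to check is consistency: for each $t$, let $s[t]$ denote the state produced by $\gamma$ and $\varsigma[t] = \T{0}(s[t])$. By construction there is a $u[t]\in U$ with $s[t+1] = f(s[t], u[t])$, and since $s[t]\in\invT{0}(\varsigma[t])$ and $s[t+1]\in\invT{0}(\varsigma[t+1])$, the witness $(s[t],s[t+1],u[t])$ shows $\mathcal R_o(\invT{0}(\varsigma[t]),\invT{0}(\varsigma[t+1]))=1$, so $(\varsigma[t],e[t])\rightarrow_o^{(0)}(\varsigma[t+1],e[t+1])$ holds in $\mathbb{D}_o^{(0)}$. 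Hence $\tilde\gamma$ is a consistent controller.

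Next, I would verify that $\tilde\gamma$ wins for $\varphi$. Let $\sigma = (s[0],e[0])(s[1],e[1])\dots$ be the continuous execution and $\tilde\sigma = (\varsigma[0],e[0])(\varsigma[1],e[1])\dots$ the induced discrete one. Since $\T{0}$ is proposition preserving, every atomic proposition $p$ has the property that $s[t]\models p$ iff $\varsigma[t]\models p$ (using the cell-labelling convention stated just before the construction of the initial FTSs), and the environment components are identical. A straightforward induction on the structure of $\varphi$ (or equivalently on the inductive definition of $\models$) then yields $\sigma\models\varphi \iff \tilde\sigma\models\varphi$, and by hypothesis the left side holds for every $e[0]e[1]\dots$. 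Therefore $\{\T{0}(s[0])\}$ is a winning set in $\mathbb{D}_o^{(0)}$, and Definition~\ref{def:largestwin} gives $\T{0}(s[0])\in W(\mathbb D_o^{(0)},\varphi)$.

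The only subtle step is the consistency check, and in particular making sure that the witnesses $s[t]$ and $s[t+1]$ really lie in the inverse images of the corresponding discrete states — this is immediate from the definition of $\T{0}$, but it is the place where the \emph{optimistic} (existential) flavor of $\mathcal R_o$ is essential; the same argument would fail for $\mathcal R_p$ because there is no reason for \emph{every} point of $\invT{0}(\varsigma[t])$ to be controllable into $\invT{0}(\varsigma[t+1])$. One minor bookkeeping issue is that Definition~\ref{def:controller} formally takes the initial discrete state as an argument; since we only need the claim at a single initial state, we simply fix that argument to $\T{0}(s[0])$ (and define $\tilde\gamma$ arbitrarily elsewhere), which is enough to exhibit $\{\T{0}(s[0])\}$ as a winning set.
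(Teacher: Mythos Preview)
Your proof is correct and follows essentially the same approach as the paper: project the continuous winning controller through $\T{0}$, use the existential nature of $\mathcal R_o$ to verify consistency of the resulting discrete controller, and conclude that the projected execution satisfies $\varphi$. If anything, your version is slightly more explicit than the paper's, since you spell out the role of the proposition-preserving property in transferring satisfaction of $\varphi$, whereas the paper leaves that step implicit.
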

%\comments{Yilin:I shifted the proofs to the end}
%\comments{Richard: I would put the proofs right after the statements of the theorems unless there is a reason not to. }
\begin{proof}[Proof of Theorem~\ref{thrm:pes2cont0}]
  Suppose the winning controller for $W(\mathbb D^{(0)}_p,\varphi)$ is $\gamma_p = (\gamma_{p, 1}, \gamma_{p, 2}, \dots, \gamma_{p, t}, \dots)$. Consider a discrete state $\varsigma[0] = \T{0}(s[0]) \in  W(\mathbb D^{(0)}_p,\varphi)$. For all possible environmental actions $e[0]e[1]\dots$, we can create the controlled execution using $\gamma_p$. This gives a sequence of states $(\varsigma[0],e[0])(\varsigma[1],e[1])\dots$, which satisfies the specification $\varphi$. 

  Consider now a continuous state $s[0]\in \invT{0}(\varsigma[0])$. From the construction of $\mathbb D^{(0)}_p$, we know that
  \begin{displaymath}
    \mathcal{R}_p(T^{-1}_{\mathcal{S}^{(0)}}(\varsigma[t]), T^{-1}_{\mathcal{S}^{(0)}}(\varsigma[t+1])) = 1.
  \end{displaymath}
  Thus, we can recursively define the consistent continuous controller $\gamma
= (\gamma_1, \gamma_2, \dots)$ to be
  \begin{enumerate}
    \item $\gamma_1(s[0],e[0])$ returns an $s[1]\in \invT{0}(\varsigma[1])$ such that there exists an $u[0]\in U$ and $f(s[0],u[0]) = s[1]$. 
    \item $\gamma_{t+1}(s[0],e[0],\dots,e[t])$ returns an $s[t+1]\in \invT{0}(\varsigma[t+1])$ such that there exists an $u[t]\in U$ and 
      \begin{align*}
          f( \gamma_{t}(s[0],e[0],\dots&,e[t-1]),u[t])\\
	&=\gamma_{t+1}(s[0],e[0],\dots,e[t]).
      \end{align*}
  \end{enumerate}
  As a result, we have a sequence $(s[0],e[0])(s[1],e[1])\dots$, where $\T{0}(s[t]) = \varsigma[t]$. Hence, the controller $\gamma$ is also winning at $s[0]$, which completes the proof.
\end{proof}

\begin{proof}[Proof of Theorem~\ref{thrm:cont2opt0}]
  %We want to show that if $s[0]$ is winning for the continuous system, then $\varsigma[0] = T_{\mathcal{S}^{(0)}}( s[0] ) \in \mathcal{S}^{(0)}_\text{init}$ is winning for the optimistic FTS.

  Suppose $\gamma = (\gamma_1, \gamma_2, \dots)$ is winning for $W(TS(\Sigma),\varphi)$ and $s[0]\in W(TS(\Sigma),\varphi)$. For all possible environmental actions $e[0]e[1]\dots$, we create a controlled execution using $\gamma$: $(s[0],e[0])(s[1],e[1])\dots$, which is winning.
 
  Now consider the discrete state $\varsigma[t] = \T{0}(s[t])$. By the definition of $\mathcal R_o$, we know that
  \begin{displaymath}
    (\varsigma[t],e[t]) \rightarrow_{o}^{(0)}(\varsigma[t+1],e[t+1]).
  \end{displaymath}
  As a result, we can construct a consistent controller  $\gamma_o
  = (\gamma_{o,1},\dots)$ for $\varsigma[0] = \T{0}(s[0])$ as
  $\gamma_{o,t}(\varsigma[0],e[0], \dots, e[t-1])
  = \T{0}\left(\gamma_t(s[0],e[0],\dots,e[t-1]\right)$.
  Thus, we get a sequence $(\varsigma[0],e[0])(\varsigma[1],e[1])\dots$, where $\varsigma[t] = \T{0}(s[t])$. Hence, the controller $\gamma_o$ is winning at $\varsigma[0]$, which completes the proof.
\end{proof}

We now define the following three sets:
\begin{equation}
  \mathcal{W}^{(i)} = W(\mathbb{D}_p^{(i)}, \varphi),
\end{equation}
referred to as \emph{the winning set};
\begin{equation}
  \mathcal{L}^{(i)} = L(\mathbb{D}_o^{(i)}, \varphi)
\end{equation}
as \emph{the losing set}; and
\begin{equation}
  \mathcal{M}^{(i)} =\mathcal S^{(i)}\backslash \left(\mathcal{W}^{(i)} \cup
  \mathcal{L}^{(i)}\right),
\end{equation}
as the \emph{the maybe set}. We can further define the inverse image of these
sets on $dom(S)$ as $\mathcal W_c^{(i)}
= \invT{i}(\mathcal W^{(i)})$, $\mathcal L_c^{(i)} = \invT{i}(\mathcal
L^{(i)})$ and $\mathcal M_c^{(i)} = \invT{i}(\mathcal M^{(i)})$.

By Theorem~\ref{thrm:pes2cont0} and \ref{thrm:cont2opt0}, it is clear that
\vskip 1mm
\fbox{\begin{minipage}{0.93\columnwidth}
\begin{enumerate}
  \item If $S_{init}\subseteq\mathcal W_c^{(0)}$, then $S_{init}$ is a winning set for $TS(\Sigma)$. Furthermore, the winning controller can be constructed in a similar fashion as is discussed in the proof of Theorem~\ref{thrm:pes2cont0}.
  \item If $S_{init}\bigcap \mathcal L_c^{(0)}\neq \emptyset$, then $S_{init}$ is not a winning set for $TS(\Sigma)$. 
  \item If neither 1) nor 2) is true, then a finer partition is needed to answer the Realizability Problem.
\end{enumerate}
\end{minipage}}
\vskip 2mm

For case 3), one may naively create a finer partition function and the
corresponding pessimistic and optimistic FTSs. In the next subsection, we show
how to iteratively do this in order to reduce the computational complexity of
the abstraction algorithm by exploiting the properties of the winning set.

\subsection{Refinement Procedure}

We define a refinement operation as
\begin{equation}
  \mathbf{split}_m:2^{dom(S)} \times \{1, \dots, m \} \rightarrow 2^{dom(S)}
\end{equation}
such that for all $X \subseteq dom(S)$ and $i, j \in \{1, \dots, m\}, i \neq
j$, it has the following properties: $\mathbf{split}_m(X, i) \subset X$,
$\mathbf{split}_m(X, i) \cap \mathbf{split}_m(X, j) = \emptyset$ and
$\bigcup\limits_{k = 1}^m \mathbf{split}_m(X, k) = X$.

\begin{remark}
  The index $m$ on $\mathbf{split}_m$ is the number of children that a region
  should be split into upon refinement. We leave it unspecified how to choose
  $m$ and the exact shape of the regions generated by $\mathbf{split}_m$, since
  the exact details are not relevant for the algorithm. In the implementation
  in Section~\ref{sec:simulation}, a split of $X \subset \mathbb{R}^n$ into
  $2^n$ equally sized hyperrectangles was used (assuming that the initial
  proposition preserving partition consisted only of hyperrectangles).
\end{remark}

We will focus our computational resources (i.e. perform a further refinement) on the states in the maybe set
$\mathcal{M}^{(i)}$. Intuitively, these states have the potential to become
winning when we create finer partitions. With $\mathcal S^{(i)}$ and $\T{i}$ as
the set of discrete states and the partition function of the $i$th iteration,
respectively, we define $\mathcal S^{(i+1)}$ and $\T{i+1}$ in the following way:

\begin{enumerate}
  \item If $\varsigma\in \mathcal W^{(i)}\cup \mathcal L^{(i)}$, then $(\varsigma,1)\in \mathcal S^{(i+1)}$ and
    \begin{displaymath}
      \invT{i+1}( (\varsigma,1)) = \invT{i}(\varsigma).
    \end{displaymath}
 \item If $\varsigma\in \mathcal M^{(i)}$, then $(\varsigma,j)\in \mathcal S^{(i+1)}$ for all $j = 1,\dots,m$ and
    \begin{displaymath}
      \invT{i+1}( (\varsigma,j)) = \mathbf{split}_m(\invT{i}(\varsigma),j).
    \end{displaymath}
\end{enumerate}
Given the discrete states, the state space $\mathcal V^{(i+1)}$ can be defined as
  $\mathcal V^{(i+1)} = \mathcal S^{(i+1)}\times dom(\mathcal E)$, 
and the initial states $\mathcal V^{(i+1)}_{init}$ can be defined in a similar fashion.

\shorten{
\begin{remark}
One can consider the discrete state spaces $\mathcal S^{(0)}$, $\mathcal
S^{(1)}$, $\dots$ to form a forest (a disjoint union of trees), where the states in $\mathcal S^{(0)}$ are the roots and $(\varsigma,j)\in \mathcal S^{(i+1)}$ is the $j$th child of $\varsigma\in \mathcal S^{(i)}$. 
\end{remark}
}

\finalcdc{
A simple example of the refinement procedure is provided in
Figure~\ref{fig:refinement}. An initial preposition preserving partition is
constructed from the continuous state space $dom(S)$, which in this case,
results in three discrete states (and corresponding regions in the continuous
state space). The discrete states are marked as to belonging to either the
winning (crosshatched green), maybe (solid yellow) or losing (dotted red) set.
To refine the partition, the $\mathbf{split}_3$-operator (using equally sized
rectangles as partitions) is applied to the state in the maybe set,
namely $\varsigma_2$. The refined partition can be seen in the rightmost figure, where
a new reachability analysis has been performed. The next step of the procedure
would further refine the new maybe set, $(\varsigma_2,3)$.

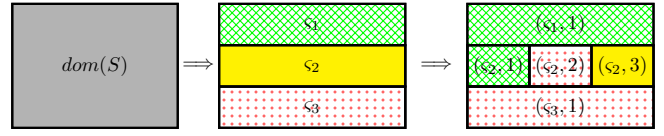
\begin{figure}[hbt!]
  \centering
  \begin{tikzpicture}[thick, scale=0.55, every node/.style={scale=0.75}]

    %\filldraw [draw=black, fill=lightgray ] (0, 0) rectangle (4.5, 4);
    %\filldraw [draw=black, fill=white] (7, 0) rectangle (12, 4);
    %\filldraw [draw=black, fill=lightgray ] (6, 0) rectangle (10.5, 3);
    
    % Left
    \filldraw [draw=black, pattern=dots, pattern color=red!60] (0, 0)
rectangle (4.5, 1);
    \filldraw [draw=black, fill=yellow] (0, 1)
rectangle (4.5, 2);
    \filldraw [draw=black, pattern=crosshatch, pattern color=green] (0, 2)
rectangle (4.5, 3);

    % Right 
    \filldraw [draw=black, pattern=dots, pattern color=red!60] (6, 0)
rectangle (10.5, 1);
    \filldraw [draw=black, fill=yellow] (9, 1)
rectangle (10.5, 2);
    \filldraw [draw=black, pattern=crosshatch, pattern color=green] (6, 2)
rectangle (10.5, 3);
    \filldraw [draw=black, pattern=crosshatch, pattern color=green] (6, 1)
rectangle (7.5, 2);
    \filldraw [draw=black, pattern=dots, pattern color=red!60] (7.5, 1)
rectangle (9, 2);

    \node at (0.75 + 6, 1.5) {$(\varsigma_2, 1)$};
    \node at (2.25 + 6, 1.5) {$(\varsigma_2, 2)$};
    \node at (3.75 + 6, 1.5) {$(\varsigma_2, 3)$};

    \node at (6 + 2.25, 0.5) {$(\varsigma_3, 1)$};
    \node at (6 + 2.25, 2.5) {$(\varsigma_1, 1)$};
    
    \node at (2.25, 0.5) {$\varsigma_3$};
    \node at (2.25, 1.5) {$\varsigma_2$};
    \node at (2.25, 2.5) {$\varsigma_1$};

    \draw [draw=black] (0, 0) rectangle (4.5, 3);
    \draw [draw=black] (0, 0) rectangle (4.5, 1);
    \draw [draw=black] (0, 1) rectangle (4.5, 2);

    \draw [draw=black] (6, 0) rectangle (10.5, 3);
    \draw [draw=black] (6, 0) rectangle (10.5, 1);
    \draw [draw=black] (6, 1) rectangle (10.5, 2);
    \draw [draw=black] (6, 1) rectangle (7.5, 2);
    \draw [draw=black] (7.5, 1) rectangle (9, 2);

    \filldraw [draw=black, fill=black!30] (-5.0, 0) rectangle (-1.0, 3);
    \draw [draw=black] (-5, 0) rectangle (-1.0, 3);
    \node at (-3, 1.5) {$dom(S)$};

    % Arrows
    \node at (5.25, 1.5) {$\Longrightarrow$};
    \node at (-0.5, 1.5) {$\Longrightarrow$};

    %\draw[gray, very thin] (0,-0) grid (11,8);

  \end{tikzpicture}
  \caption{An example of the proposed refinement procedure. An initial
preposition preserving partition is constructed in the first step. The regions
are labeled with their corresponding discrete state. The states are colored
differently depending on if they belong to the winning (crosshatched green),
maybe (solid yellow) or losing (dotted red) set. The
$\mathbf{split}_3$-operator is used to further refine the states in the maybe
set (only one iteration is illustrated).}
  \label{fig:refinement}
\end{figure}
}

 We now define the transition relations of the two FTSs. We begin with the relations in the pessimistic FTS. % We take $\mathbb{D}_{p}^{(i)}$ to be a transition system $(\mathcal{V}^{(i)}, \mathcal{V}_\text{init}^{(i)}, \rightarrow_p^{(i)} )$ and $\mathbb{D}_{o}^{(i)} = (\mathcal{V}^{(i)}, \mathcal{V}_\text{init}^{(i)}, \rightarrow_o^{(i)} )$.%\footnote{$\mathcal{S}^{(i)}_\text{init}$ has to be defined. How does it relate to $\mathcal{S}_\text{init}^{(i-1)}$?}
For any two states $(\varsigma_a, j) , (\varsigma_b, k) \in \mathcal{S}^{(i+1)}$ and environmental states $e_a$, $e_b$, we have that $((\varsigma_a, j),e_a) \rightarrow_p^{(i+1)} ((\varsigma_b, k),e_b)$ if and only if one of the following statements holds:

%\comments{Yilin:I changed the description into a more symmetric way for the sake of readability}
\begin{enumerate}
  \item \emph{WW-transition}: %\footnote{win $\rightarrow$ win} 
    $\varsigma_a, \varsigma_b \in \mathcal{W}^{(i)}$, $j = k = 1$ and %and $\varsigma_a \rightarrow_p^{(i)} \varsigma_b$.
    \begin{displaymath}
      (\varsigma_a,e_a)\rightarrow_p^{(i)}(\varsigma_b,e_b).
    \end{displaymath}
  \item%\footnote{maybe $\rightarrow$ win} 
\emph{MW-transition}:
    $\varsigma_a \in \mathcal{M}^{(i)}$, $\varsigma_b \in \mathcal{W}^{(i)}$, $k =1$ and 
    \begin{displaymath}
      \mathcal{R}_p( \, T^{-1}_{\mathcal{S}^{(i+1)}}( \, ( \varsigma_a, j) \, ), \, T^{-1}_{\mathcal{S}^{(i+1)}}( \, (\varsigma_b, 1) \, ) \, ) = 1.
    \end{displaymath}
  \item%\footnote{maybe $\rightarrow$ maybe}
\emph{MM-transition}:
    $\varsigma_a, \varsigma_b \in \mathcal{M}^{(i)}$ %$\varsigma_a \rightarrow^{(i)}_o \varsigma_b$
    and 
    \begin{displaymath}
     \mathcal{R}_p( \, T^{-1}_{\mathcal{S}^{(i+1)}}( \, (\varsigma_a, j) \, ), \, T^{-1}_{\mathcal{S}^{(i+1)}}( \, (\varsigma_b, k) \, ) \, ) = 1. 
    \end{displaymath}
\end{enumerate}
\begin{remark}
WW stands for a transition between two winning states, and analogously for MW
and MM.  Notice that we omit many possible transitions. This allows us to focus
on the critical transitions that affects the computation of the winning set.
\shorten{The rationale for this is that it is waste to check if, for example,
a winning state can reach a maybe state, since we already know that there is
a winning controller in the winning state.}
\end{remark}

The update rule for the optimistic FTS is similar. We have that $((\varsigma_a, j),e_a) \rightarrow_o^{(i+1)} ((\varsigma_b, k),e_b)$ if and only if one of the following three statements holds:
\begin{enumerate}
  \item \emph{WW-transition}: $\varsigma_a, \varsigma_b \in \mathcal{W}^{(i)}$, $j = k = 1$ and 
    \begin{displaymath}
      (\varsigma_a,e_a)\rightarrow_p^{(i)}(\varsigma_b,e_b).
    \end{displaymath}
    Notice that we are using the transition relation $\rightarrow_p^{(i)}$ instead of $\rightarrow_o^{(i)}$ for this case.

    %\comments{Yilin: Using $\rightarrow_o^{(i)}$ will not affect the proof. However, it introduces some unnecessary transitions.}
  \item \emph{MW-transition}: $\varsigma_a \in \mathcal{M}^{(i)}$, $\varsigma_b \in \mathcal{W}^{(i)}$, $k =1$ and 
    \begin{displaymath}
      \mathcal{R}_o( \, T^{-1}_{\mathcal{S}^{(i+1)}}( \, ( \varsigma_a, j) \, ), \, T^{-1}_{\mathcal{S}^{(i+1)}}( \, (\varsigma_b, 1) \, ) \, ) = 1.
    \end{displaymath}
  \item \emph{MM-transition}: $\varsigma_a, \varsigma_b \in \mathcal{M}^{(i)}$ and 
    \begin{displaymath}
     \mathcal{R}_o( \, T^{-1}_{\mathcal{S}^{(i+1)}}( \, (\varsigma_a, j) \, ), \, T^{-1}_{\mathcal{S}^{(i+1)}}( \, (\varsigma_b, k) \, ) \, ) = 1. 
    \end{displaymath}
%  \item\footnote{win $\rightarrow$ win} $\varsigma_a, \varsigma_b \in \mathcal{W}^{(i)}$ and $\varsigma_a \rightarrow_p^{(i)} \varsigma_b$.
%  \item\footnote{maybe $\rightarrow$ win} $\varsigma_a \in \mathcal{M}^{(i)}$, $\varsigma_b \in \mathcal{W}^{(i)}$, $\varsigma_a \rightarrow_o^{(i)} \varsigma_b$ and \\ $\mathcal{R}_o( \, T^{-1}_{\mathcal{S}^{(i+1)}}( \, ( \varsigma_a, j) \, ), \, T^{-1}_{\mathcal{S}^{(i+1)}}( \, (\varsigma_b, k) \, ) \, ) = 1$.
%  \item\footnote{maybe $\rightarrow$ maybe} $\varsigma_a, \varsigma_b \in \mathcal{M}^{(i)}$, $\varsigma_a \rightarrow^{(i)}_o \varsigma_b$ and \\ $\mathcal{R}_o( \, T^{-1}_{\mathcal{S}^{(i+1)}}( \, (\varsigma_a, j) \, ), \, T^{-1}_{\mathcal{S}^{(i+1)}}( \, (\varsigma_b, k) \, ) \, ) = 1$.
\end{enumerate}

%\comments{Yilin: I merged the original two subsections}
%In an implementation, we can terminate the refinement procedure when a controller can be synthesized or we are sure none can be found. We will provide a termination criteria in the next section.

%\subsection{Termination Criteria}

We will now expand upon Theorem~\ref{thrm:pes2cont0} and \ref{thrm:cont2opt0}
to provide a characterization of the winning sets $W(\mathbb
D_p^{(i)},\varphi)$ and $W(\mathbb D_o^{(i)},\varphi)$. \finalcdc{The proofs of
    the following theorems are deferred to the appendix for the sake of
legibility.}

\begin{theorem}
  For any discrete state $\varsigma[0] \in W(\mathbb D^{(i)}_p,\varphi)$ that is winning for the pessimistic FTS $\Dp{i}$, the corresponding continuous state is also winning in $TS(\Sigma)$, i.e., 
 \begin{equation}
  \invT{i}(\varsigma[0])\subseteq W(TS(\Sigma),\varphi).  
   \label{eq:setinclusionpes}
 \end{equation}
 Furthermore, its child $(\varsigma[0],1)$ is also winning for $\mathbb D^{(i+1)}_p$, i.e., %\comments{Robert: We use both \emph{winning in/for}. I don't know which one is correct. }
 \begin{equation}
   (\varsigma[0],1)\in W(\Dp{i+1},\varphi).  
   \label{eq:setinclusionpes2}
 \end{equation}
%If there is a winning controller for $\mathbb{D}_{p}^{(i)}$, $\gamma_p = \{\gamma_{p, 1}, \gamma_{p, 2}, \dots, \gamma_{p, t}, \dots\}$, then there is a winning controller for the continuous system.
  \label{thrm:pes2cont}
\end{theorem}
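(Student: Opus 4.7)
The plan is to first establish an inductive invariant about the pessimistic transition relation, and then combine it with essentially the argument of Theorem~\ref{thrm:pes2cont0}. The invariant I would prove by induction on $i$ is: whenever $(\varsigma_a,e_a)\rightarrow_p^{(i)}(\varsigma_b,e_b)$, the underlying continuous regions satisfy $\mathcal{R}_p(\invT{i}(\varsigma_a),\invT{i}(\varsigma_b)) = 1$. The base case $i=0$ holds by construction of $\mathbb{D}_p^{(0)}$. For the inductive step, I would case-split on the three rules defining $\rightarrow_p^{(i+1)}$: the \emph{MW} and \emph{MM} cases hold directly since they are defined precisely by the reachability condition $\mathcal{R}_p = 1$; the \emph{WW} case follows from the induction hypothesis combined with the refinement rule $\invT{i+1}((\varsigma,1)) = \invT{i}(\varsigma)$ for $\varsigma \in \mathcal{W}^{(i)}$.

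With this invariant in hand, \eqref{eq:setinclusionpes} follows by copying the construction in the proof of Theorem~\ref{thrm:pes2cont0}: given a winning controller $\gamma_p$ on $\mathbb{D}_p^{(i)}$, take any $s[0] \in \invT{i}(\varsigma[0])$, simulate the discrete execution $(\varsigma[t],e[t])$ under $\gamma_p$ against any environment, and inductively pick $s[t+1] \in \invT{i}(\varsigma[t+1])$ together with $u[t] \in U$ satisfying $f(s[t],u[t]) = s[t+1]$. The invariant guarantees that such $s[t+1]$ and $u[t]$ always exist, yielding a consistent continuous controller whose trajectory maps state-by-state to a trajectory satisfying $\varphi$.

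For \eqref{eq:setinclusionpes2}, I would lift the discrete controller $\gamma_p$ at level $i$ to a controller $\gamma'$ at level $i+1$ by setting $\gamma'_t((\varsigma[0],1),e[0],\dots,e[t-1]) \triangleq (\gamma_{p,t}(\varsigma[0],e[0],\dots,e[t-1]),1)$. The critical observation is that a winning controller necessarily keeps every execution inside $\mathcal{W}^{(i)}$: if some $\varsigma[t]$ were outside $\mathcal{W}^{(i)}$, then the tail of the execution would witness that this state is itself winning, contradicting its exclusion from the \emph{largest} winning set of Definition~\ref{def:largestwin}. Consequently every transition required by $\gamma'$ is of \emph{WW}-type and therefore present in $\rightarrow_p^{(i+1)}$ by construction, so $\gamma'$ is consistent; since the discrete trajectory is unchanged (only the label index $1$ is appended) and the atomic propositions are inherited by the refinement, $\gamma'$ remains winning at $(\varsigma[0],1)$.

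The main obstacle I anticipate is the \emph{WW}-inheritance step: one must verify carefully that the refinement rule for $\varsigma \in \mathcal{W}^{(i)}$, which leaves the underlying region untouched, makes the level-$i$ transition relation compatible with the level-$(i+1)$ one without any hidden loss, and one must justify the ``a winning controller stays in the winning set'' claim rigorously using Proposition~\ref{prop:union_of_W} rather than taking it for granted. The rest is a routine reformulation of the Theorem~\ref{thrm:pes2cont0} argument.
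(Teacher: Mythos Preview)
Your proposal is correct and matches the paper's proof: the paper also notes that $(\varsigma_a,e_a)\rightarrow_p^{(i)}(\varsigma_b,e_b)$ implies $\mathcal{R}_p(\invT{i}(\varsigma_a),\invT{i}(\varsigma_b))=1$ (stated without spelling out the induction) to reduce \eqref{eq:setinclusionpes} to the argument of Theorem~\ref{thrm:pes2cont0}, and proves \eqref{eq:setinclusionpes2} via exactly the same lifted controller together with the observation that the controlled execution stays inside $\mathcal{W}^{(i)}$. The paper isolates this last observation as Lemma~\ref{lemma:setinvariant}, derived from the GR1 time-shift property of Lemma~\ref{lemma:timeinvariant}; your ``tail witnesses winning'' argument is precisely this, so just be aware that it relies on tails of $\varphi$-satisfying runs again satisfying $\varphi$, which Proposition~\ref{prop:union_of_W} by itself does not give you.
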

\begin{theorem}
 For any continuous state $s[0] \in W(TS(\Sigma),\varphi)$ that is winning for $TS(\Sigma)$, the corresponding discrete state is also winning in $\mathbb D^{(i)}_o$, i.e., 
 \begin{equation}
 \T{i}(s[0])\in W(\mathbb D^{(i)}_o,\varphi).
   \label{eq:setinclusionopt}
 \end{equation}
 Furthermore, if the discrete state $\varsigma[0]\in L(\Do{i},\varphi)$ is losing for $\Do{i}$, then its child is also losing in $\Do{i+1}$, i.e.,
 \begin{equation}
   (\varsigma[0],1)\in L(\Do{i+1},\varphi).
   \label{eq:setinclusionopt2}
 \end{equation}

 % If there is a winning controller $g$ for the continuous system, then there is a winning controller $\gamma_o^{(i)} = ( \gamma^{(i)}_{o, 1}, \gamma^{(i)}_{o, 2}, \dots, \gamma^{(i)}_{o, t}, \dots )$ for the optimistic transition system $\mathbb{D}_{o}^{(i)}$.
  \label{thrm:cont2opt}
\end{theorem}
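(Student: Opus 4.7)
The plan is to prove both statements by induction on the iteration index $i$. For statement~(\ref{eq:setinclusionopt}), the base case $i=0$ is exactly Theorem~\ref{thrm:cont2opt0}. For the inductive step I note that $W(\Do{i},\varphi)=\mathcal S^{(i)}\setminus\mathcal L^{(i)}=\mathcal W^{(i)}\cup\mathcal M^{(i)}$, so by the induction hypothesis $\T{i}(s[0])$ lies in $\mathcal W^{(i)}$ or in $\mathcal M^{(i)}$, and I treat the two cases separately. If $\T{i}(s[0])\in\mathcal W^{(i)}$, Theorem~\ref{thrm:pes2cont} gives $\T{i+1}(s[0])=(\T{i}(s[0]),1)\in W(\Dp{i+1},\varphi)$; a direct comparison of the update rules (the WW-rules literally coincide, and $\mathcal R_p\Rightarrow\mathcal R_o$ on nonempty regions in the MW and MM cases) shows that $\rightarrow_p^{(i+1)}\;\subseteq\;\rightarrow_o^{(i+1)}$, hence $W(\Dp{i+1},\varphi)\subseteq W(\Do{i+1},\varphi)$ and we are done in this case.

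If instead $\T{i}(s[0])\in\mathcal M^{(i)}$, I build a two-phase winning controller $\gamma_o^{(i+1)}$ for $\T{i+1}(s[0])$ in $\Do{i+1}$. Fix an environment sequence $e[0]e[1]\dots$ and let $s[0]s[1]\dots$ be the continuous execution produced by the winning controller $\gamma$ of $TS(\Sigma)$. As long as $\T{i}(s[t])\in\mathcal M^{(i)}$, I define $\gamma_o^{(i+1)}$ by projecting $\gamma$ through $\T{i+1}$: the required transition is of type MM, and its $\mathcal R_o$-hypothesis is witnessed directly by $s[t+1]=f(s[t],u[t])$. Let $\tau$ be the first time the projected trajectory enters $\mathcal W^{(i)}$; the transition at $t=\tau-1$ is then of type MW, witnessed in the same way. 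From time $\tau$ onwards I splice in the $\Dp{i+1}$-winning controller that exists at $(\T{i}(s[\tau]),1)$ by Theorem~\ref{thrm:pes2cont}; this controller uses only $\rightarrow_p^{(i+1)}\subseteq\rightarrow_o^{(i+1)}$ transitions and already guarantees $\varphi$. If $\tau=\infty$ the M-phase runs forever, and $\varphi$ is satisfied on the projected play because $\T{i+1}$ is proposition preserving. The construction is well defined provided the projected trajectory never enters $\mathcal L^{(i)}$, which I extract from the induction hypothesis applied at each step together with the standard closure-under-play property of GR1 winning sets (plays of a winning controller remain in $W(TS(\Sigma),\varphi)$).

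For statement~(\ref{eq:setinclusionopt2}) no induction is needed and I argue directly from the update rules of $\Do{i+1}$. Each of the WW, MW and MM rules forces the source state to lie in $\mathcal W^{(i)}\cup\mathcal M^{(i)}$, so when $\varsigma[0]\in\mathcal L^{(i)}$ the child $(\varsigma[0],1)$ has no outgoing transitions in $\Do{i+1}$. Consequently no consistent controller (in the sense of Definition~\ref{def:controller}) can produce an infinite execution starting at $(\varsigma[0],1)$, so $\{(\varsigma[0],1)\}$ is not a winning set, giving $(\varsigma[0],1)\in L(\Do{i+1},\varphi)$. I expect the main obstacle to be the two-phase construction in the M-case of~(\ref{eq:setinclusionopt}): the bookkeeping of which transition type (MM, MW, or the spliced pessimistic phase) applies at each step must be kept consistent, and the whole argument relies on the inductive invariant that the projected trajectory cannot enter the losing cluster.
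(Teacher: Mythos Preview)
Your proposal is correct and follows essentially the same route as the paper: prove \eqref{eq:setinclusionopt2} directly from the absence of outgoing transitions for children of losing states, and prove \eqref{eq:setinclusionopt} by induction on $i$ via a two-phase controller that projects the continuous winning execution through $\T{i+1}$ while the trajectory stays in the maybe cluster (MM-transitions), crosses into the winning cluster via an MW-transition at the hitting time $\tau$, and thereafter follows the pessimistic winning controller (WW-transitions), with the induction hypothesis together with invariance of $W(TS(\Sigma),\varphi)$ guaranteeing the trajectory never visits $\mathcal L^{(i)}$. The only cosmetic difference is that you package the post-$\tau$ phase as ``Theorem~\ref{thrm:pes2cont} plus $\rightarrow_p^{(i+1)}\subseteq\rightarrow_o^{(i+1)}$'' whereas the paper lifts the $\Dp{i-1}$-controller directly via the WW-rule; the content is identical.
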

Combining Theorem~\ref{thrm:pes2cont} and \ref{thrm:cont2opt}, we have the following corollary:
\begin{corollary}
  $\mathcal W_c^{(0)}\subseteq \mathcal W_c^{(1)}\subseteq\dots\subseteq
W(TS(\Sigma),\varphi)\subseteq \dots \subseteq dom(S)\;\backslash\;\mathcal
L_c^{(1)}\subseteq dom(S)\;\backslash\;\mathcal L_c^{(0)}$.
  \label{crly:cont2opt}
\end{corollary}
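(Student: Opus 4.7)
The plan is to decompose the corollary into four inclusions and discharge each with a short argument that plugs directly into Theorems~\ref{thrm:pes2cont} and \ref{thrm:cont2opt}: the two middle inclusions $\mathcal W_c^{(i)}\subseteq W(TS(\Sigma),\varphi)\subseteq dom(S)\setminus \mathcal L_c^{(i)}$, the monotonicity of the left chain $\mathcal W_c^{(i)}\subseteq \mathcal W_c^{(i+1)}$, and the monotonicity of the right chain $dom(S)\setminus \mathcal L_c^{(i+1)}\subseteq dom(S)\setminus \mathcal L_c^{(i)}$ (equivalently $\mathcal L_c^{(i)}\subseteq \mathcal L_c^{(i+1)}$).

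For the middle inclusions I would argue as follows. The first part of Theorem~\ref{thrm:pes2cont} says that for every $\varsigma\in \mathcal W^{(i)}=W(\Dp{i},\varphi)$ we have $\invT{i}(\varsigma)\subseteq W(TS(\Sigma),\varphi)$, so taking the union over $\varsigma\in \mathcal W^{(i)}$ gives $\mathcal W_c^{(i)}\subseteq W(TS(\Sigma),\varphi)$. For the upper bound, pick $s\in W(TS(\Sigma),\varphi)$ and apply the first part of Theorem~\ref{thrm:cont2opt} to get $\T{i}(s)\in W(\Do{i},\varphi)$; since $W(\Do{i},\varphi)$ and $\mathcal L^{(i)}=L(\Do{i},\varphi)$ are disjoint by Definition~\ref{def:largestwin}, it follows that $\T{i}(s)\notin \mathcal L^{(i)}$ and hence $s\notin \mathcal L_c^{(i)}$.

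For the monotonicity of the winning chain, take $s\in \mathcal W_c^{(i)}$ and set $\varsigma=\T{i}(s)\in \mathcal W^{(i)}$. The first rule of the refinement procedure assigns $\varsigma$ a single child $(\varsigma,1)\in \mathcal S^{(i+1)}$ with $\invT{i+1}((\varsigma,1))=\invT{i}(\varsigma)$, so $\T{i+1}(s)=(\varsigma,1)$. The second part of Theorem~\ref{thrm:pes2cont} then places $(\varsigma,1)$ in $W(\Dp{i+1},\varphi)=\mathcal W^{(i+1)}$, and hence $s\in \mathcal W_c^{(i+1)}$. The co-losing chain is entirely symmetric: losing discrete states are also not split by the refinement procedure, so for $s\in \mathcal L_c^{(i)}$ the identity $\T{i+1}(s)=(\T{i}(s),1)$ holds, and the second part of Theorem~\ref{thrm:cont2opt} places this child in $\mathcal L^{(i+1)}$.

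The only bookkeeping point is that the refinement rule yields a well-defined $\T{i+1}$, but this is automatic: $\mathcal M^{(i)}$ is defined as the complement of $\mathcal W^{(i)}\cup \mathcal L^{(i)}$, so exactly one of the two refinement cases applies to any given $\varsigma\in \mathcal S^{(i)}$. There is therefore no hard step in the proof; the content is packaged entirely into the two preceding theorems, and the corollary is obtained simply by composing their two parts with the refinement rule that defines how children inherit their parents' pre-images.
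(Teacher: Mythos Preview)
Your proposal is correct and matches the paper's intent exactly: the paper does not give an explicit proof of the corollary but simply states that it follows by combining Theorems~\ref{thrm:pes2cont} and~\ref{thrm:cont2opt}, and your decomposition into the two middle inclusions (from the first parts of each theorem) plus the two monotonicity steps (from the second parts together with the refinement rule $\invT{i+1}((\varsigma,1))=\invT{i}(\varsigma)$ for $\varsigma\in\mathcal W^{(i)}\cup\mathcal L^{(i)}$) is precisely how one spells this out.
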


The box outlining the algorithm for the first iteration can be
straight-forwardly adjusted with Theorem~\ref{thrm:pes2cont} and
\ref{thrm:cont2opt} to outline the full algorithm.
\shorten{
\vskip 1mm
\fbox{\begin{minipage}{0.93\columnwidth}
\begin{enumerate}
  \item If $S_{init}\subseteq\mathcal W_c^{(i)}$, then $S_{init}$ is a winning
set for $TS(\Sigma)$. A winning controller can be constructed in a similar fashion as is discussed in the proof of Theorem~\ref{thrm:pes2cont0}.
  \item If $S_{init}\bigcap \mathcal L_c^{(i)}\neq \emptyset$, then $S_{init}$
is not a winning set for $TS(\Sigma)$. Thus, we can stop the refinement
procedure because there is no winning controller.
  \item If neither of the above statements is fulfilled, then we cannot give a definitive answer on whether $S_{init}$ is winning or not at the $i$th iteration. As a result, we create the FTSs $\Dp{i+1}$ and $\Do{i+1}$ and try to solve the winning sets for them.
\end{enumerate}
\end{minipage}}
\vskip 2mm
}

\shorten{
\begin{remark}
 It is worth noticing that we do not use any special properties of the $f$ function or
the sets $U,\, dom(S)$ and $S_{init}$, except for the reachability relations
that they induce. As a result, the algorithm presented in this article can be used to handle any transition system.
\end{remark}
}

\section{Numerical Results}
\label{sec:simulation}

\commentsr{Review: Use past tense}

\commentsr{Review: Could you explain why the threshold 0.2 is relevant here?}

\comments{Robert: Mention that only two dimensional systems are considered because they can be visualized and for simplicity and that our implementation handles splitting using hyperrectangles in higher dimensions}

\comments{Yilin: Fair enough.}

In this section, we perform a comparison between the algorithm in
\texttt{TuLiP} \cite{tulip} and our proposed algorithm on two systems in
$\mathbb{R}^2$ (for simplicity and illustratory purposes, the algorithm is
valid for higher-dimensional systems as well). All the simulations were
performed on a MacBook Air (1.3 GHz, 4 GB RAM).

\comments{
\subsection{Temporal Improvements}
}

Consider the system
\begin{equation}
  \begin{split}
    s[t+1] &= I_2 s[t] +
              I_2 u[t], \\
    u[t] &\in U = \{ v \in \mathbb{R}^2 : |v|_{\infty} \leq 1 \}, \\
    s[t] &\in dom(S) = [0,3] \times [0,2], \\
    s[0] &\in S_\text{init} = [0, 3] \times [0, 2],
  \end{split}
\end{equation}
where $I_2$ is the identity matrix with two columns, with the following
propositional markings in the state space: $[0,1] \times [0,1]$ as \emph{home}
and $[2,3] \times [1,2]$ as \emph{lot}. Let the environment be equipped with
a Boolean variable, \emph{park}, and let the specification of system be the
following:
  $\varphi = \square\lozenge
  \emph{home}\wedge\square(\emph{park}\rightarrow\lozenge\emph{lot})$,
which can be converted into GR1-form. \shorten{Roughly speaking the
specification implies that the system should visit the parking \emph{lot}
whenever the environment sets \emph{park} true, and always returns back
\emph{home}.}

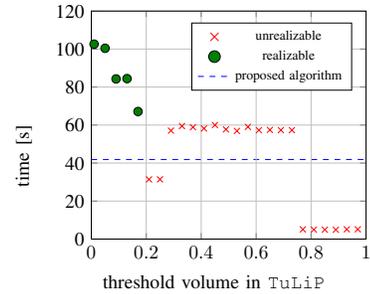
\begin{figure}[b!]
  \centering
\begingroup
\tikzset{every picture/.style={scale=0.73}}
  % This file was created by matplotlib v0.1.0.
% Copyright (c) 2010--2014, Nico Schlömer <nico.schloemer@gmail.com>
% All rights reserved.
% 
% The lastest updates can be retrieved from
% 
% https://github.com/nschloe/matplotlib2tikz
% 
% where you can also submit bug reports and leavecomments.
% 
\begin{tikzpicture}

\begin{axis}[
xlabel={$\quad\quad$ threshold volume in \texttt{TuLiP}},
ylabel={time [s]},
xmin=0, xmax=1,
ymin=0, ymax=120,
axis on top,
ytick={0,20,40,60,80,100,120},
yticklabels={0,20,40,60,80,100,120},
xmajorgrids,
ymajorgrids,
legend style={at={(1.3,1.3)}, anchor=north east, font=\scriptsize},
legend entries={{unrealizable},{realizable},{proposed algorithm}}
]
\addplot [red, mark=x, mark size=3, only marks]
coordinates {
(0.21,31.39)
(0.25,31.43)
(0.29,57.07)
(0.33,59.47)
(0.37,58.86)
(0.41,58.19)
(0.45,60.02)
(0.49,57.71)
(0.53,56.91)
(0.57,59.05)
(0.61,57.29)
(0.65,57.45)
(0.69,57.33)
(0.73,57.31)
(0.77,5.1)
(0.81,4.96)
(0.85,4.94)
(0.89,4.9)
(0.93,5.14)
(0.97,5.13)

};
\addplot [green!50.0!black, mark=*, mark size=3, mark options={draw=black}, only marks]
coordinates {
(0.01,102.57)
(0.05,100.4)
(0.09,84.27)
(0.13,84.43)
(0.17,67.12)

};
\addplot [blue, dashed]
coordinates {
(0,41.88)
(0.0204081632653061,41.88)
(0.0408163265306122,41.88)
(0.0612244897959184,41.88)
(0.0816326530612245,41.88)
(0.102040816326531,41.88)
(0.122448979591837,41.88)
(0.142857142857143,41.88)
(0.163265306122449,41.88)
(0.183673469387755,41.88)
(0.204081632653061,41.88)
(0.224489795918367,41.88)
(0.244897959183673,41.88)
(0.26530612244898,41.88)
(0.285714285714286,41.88)
(0.306122448979592,41.88)
(0.326530612244898,41.88)
(0.346938775510204,41.88)
(0.36734693877551,41.88)
(0.387755102040816,41.88)
(0.408163265306122,41.88)
(0.428571428571429,41.88)
(0.448979591836735,41.88)
(0.469387755102041,41.88)
(0.489795918367347,41.88)
(0.510204081632653,41.88)
(0.530612244897959,41.88)
(0.551020408163265,41.88)
(0.571428571428571,41.88)
(0.591836734693878,41.88)
(0.612244897959184,41.88)
(0.63265306122449,41.88)
(0.653061224489796,41.88)
(0.673469387755102,41.88)
(0.693877551020408,41.88)
(0.714285714285714,41.88)
(0.73469387755102,41.88)
(0.755102040816326,41.88)
(0.775510204081633,41.88)
(0.795918367346939,41.88)
(0.816326530612245,41.88)
(0.836734693877551,41.88)
(0.857142857142857,41.88)
(0.877551020408163,41.88)
(0.897959183673469,41.88)
(0.918367346938775,41.88)
(0.938775510204082,41.88)
(0.959183673469388,41.88)
(0.979591836734694,41.88)
(1,41.88)

};
\path [draw=black, fill opacity=0] (axis cs:0,120)--(axis cs:1,120);

\path [draw=black, fill opacity=0] (axis cs:1,1.77635683940025e-15)--(axis cs:1,120);

\path [draw=black, fill opacity=0] (axis cs:0,1.77635683940025e-15)--(axis cs:1,1.77635683940025e-15);

\path [draw=black, fill opacity=0] (axis cs:0,1.77635683940025e-15)--(axis cs:0,120);

\end{axis}

\end{tikzpicture}
\endgroup
  \caption{Timing data for the current algorithm in \texttt{TuLiP} and our
proposed algorithm. The specifications that we are considering for the
continuous system are realizable, but \texttt{TuLiP} cannot synthesize
a controller until the threshold volume is below 0.2. The dots and crosses
indicate the time for \texttt{TuLiP} to partition the state space and then try
to synthesize a controller, giving a positive or a negative answer,
respectively, on whether the specifications are realizable. Our algorithm
concludes that the specifications are realizable without taking any threshold
volume as input, illustrated by the dashed blue line.}
  \label{fig:continuous_plot}
\end{figure}

The algorithm employed by \texttt{TuLiP} \cite{rhtlp} partitions the whole
state space according to a reachability analysis until no region corresponding
to a discrete state can be refined further without going below a pre-specified
threshold volume. This leads to problems when the threshold volume is set too
high, since not enough transitions can be established in the finite state
model. As illustrated by the red crosses in Figure \ref{fig:continuous_plot},
\texttt{TuLiP} failed to find a controller realizing the specification when the
threshold volume was taken larger than 0.2. When the threshold was chosen below
this value, it succeeded in finding a controller and announced that the
specifications were realizable (green dots).

Our implementation iteratively refines the partition of the state space until
a controller can be synthesized (or, in the case that the specifications are
unrealizable, until it can guarantee that none can be found). Furthermore, our
algorithm only refines the ``interesting'' areas of the state space, which
results in less computational time -- indicated by the dashed blue line.
\finalcdc{Note that the time it took to ``guess'' the right threshold value for
\texttt{TuLiP} is large.}

\comments{
\subsection{Non-Existence Proof}
}

The next example shows the actual partition that results from the two methods.
Consider the system
\begin{equation}
  \begin{split}
    s[t+1] &= I_2 s[t] +
              I_2 u[t], \\
    u[t] &\in U = \{ v \in \mathbb{R}^2 : |v|_{\infty} \leq 1 \}, \\
    s[t] &\in dom(S) = [0,4] \times [0,4], \\
    s[0] &\in S_\text{init} = [3, 3.5] \times [3, 3.5],
  \end{split}
  \label{eq:example_2}
\end{equation}
with the set of propositions: $[0, 0.5] \times [0, 0.5]$ as \emph{goal} and
$[3, 3.5] \times [3, 3.5]$ as \emph{start}. For simplicity, assume that the
environment has no variables. The initial assumption on the system is
\emph{start} and the progress specification of the system is $\square \lozenge
\text{\emph{goal}}$. \shorten{This means that the systems starts in
\emph{start} and should always eventually reach \emph{goal}. }

A set $\Omega$ is \emph{invariant} if $s(t_0) \in \Omega \Rightarrow \; s(t)
\in \Omega, \; \forall t \geq t_0$ and for all possible controls $u(t)$. It is
simple to show that the region $\mathbb{R}^2 \; \backslash \; [0, 2]^2$ is
invariant for \eqref{eq:example_2}. Since \emph{start} lies in an invariant
region, that does not contain \emph{goal}, we know a priori that there
does not exist a winning controller.

 \begin{figure}
    \subfigure[]{
            \includegraphics[width=0.22\textwidth]{./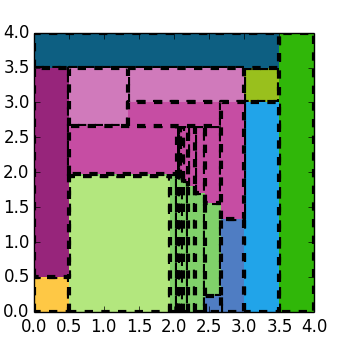}
    }
    \subfigure[]{
            \includegraphics[width=0.215\textwidth]{./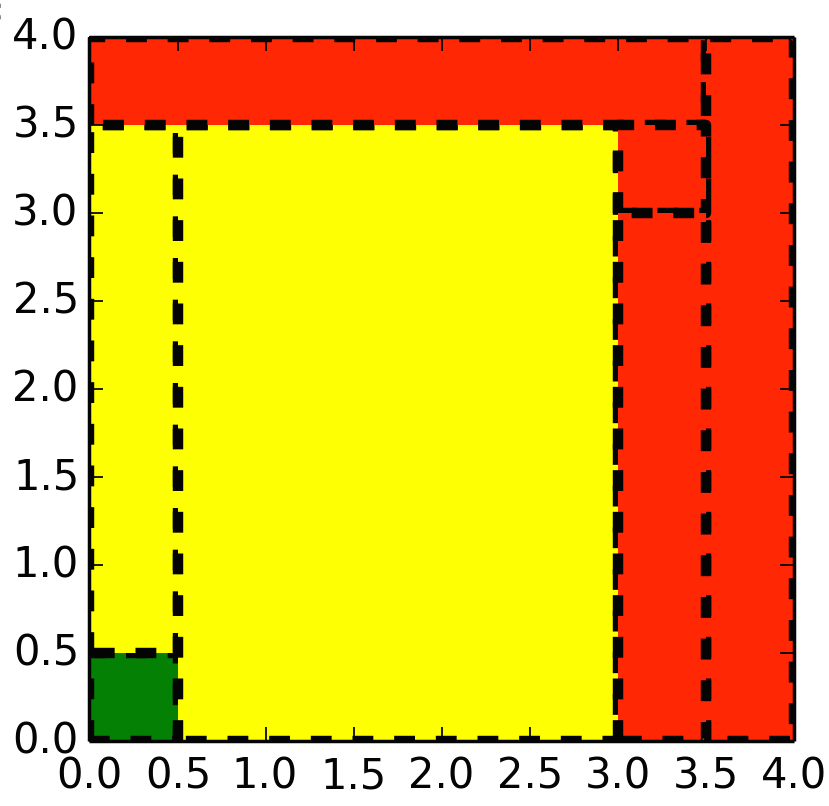}
    }
    \caption{(a) shows the partition by \texttt{TuLiP} of the system
        \eqref{eq:example_2} when the threshold volume was chosen to be 1.0.
        Regions of the same color are considered as one discrete state. (b)
        shows the partition resulting from our algorithm, with the winning
    (green), maybe (yellow) and losing (red) sets marked. Here, every region is
its own discrete state.}
    \label{fig:discretization_example}
 \end{figure}

Figure~\ref{fig:discretization_example}a) shows the partition that
\texttt{TuLiP} provided when the threshold volume was set to 1.0. Note that the
invariant region is finely partitioned. The runtime of the algorithm was 620 s.
No controller that fulfills the specifications could be synthesized using this
abstraction. Note that from the output of \texttt{TuLiP}, it is not possible to
say whether no winning controller exists, or if a winning controller of the
original system exists but \texttt{TuLiP} cannot find it because of the
partition being too coarse.

The output of our algorithm can be seen in
Figure~\ref{fig:discretization_example}b). The coloring illustrates the winning
(green), maybe (yellow) and losing (red) states. The states in the maybe set
are marked as such since some of the continuous states in them lie within the
invariant region, and some lie within the region that can reach \emph{goal}.
Since \emph{start} lies in the losing set, the algorithm terminates and
concludes with a definitive answer that there exists no winning controller
(neither for the abstraction nor the original system). This took 25 s.

\section{Conclusion}
\label{sec:conclusion}

In this paper we have presented an iterative method for abstracting
a discrete-time control system into two FTSs, representing an under- and
over-approximation of the reachability properties of the original dynamical
system.  We have provided theorems regarding the existence of controllers
fulfilling GR1-specifications for the continuous system, based on
the existence of such controllers for the two FTSs. Our proposed algorithm
provides a way of focusing the computational resources on refining only certain
areas of the state space, leading to a decrease in the time complexity of the
abstraction procedure compared to previous methods. We have made a comparison
between the proposed algorithm and the one currently used in the
\texttt{TuLiP}-framework on numerical examples with promising results.

\commentsr{
The method presented in the paper is using under- and over-approximation,
where, if a winning controller is present in the under-approximation, it is
guaranteed to also exist in the real system (and oppositely for the
over-approximation). This creates a gap in between the two approximations where
states are neither surely winning nor surely losing - hence the maybe set. It
has been shown how the abstraction of the discrete-time dynamical system can be
refined to a new over and under approximation while reusing the old computation
results, leaving the algorithm to focus only on refining and computing the
maybe-set.
}

\bibliographystyle{abbrv}
\bibliography{surf_ref}

\finalcdc{
\appendix

% Shrink the spacing between display equations a bit
\setlength{\belowdisplayskip}{4pt} \setlength{\belowdisplayshortskip}{4pt}
\setlength{\abovedisplayskip}{3pt} \setlength{\abovedisplayshortskip}{3pt}

\shorten{
\begin{proof}[Proof of Proposition~\ref{prop:union_of_W}]
  Let us define an index function $h:\bigcup_{i\in\mathcal I} \mathcal W_i\rightarrow \mathcal I$, such that for any $\varsigma\in \bigcup_{i\in\mathcal I} \mathcal W_i$, the following set inclusion holds:
  \begin{displaymath}
    \varsigma \in \mathcal W_{h(\varsigma)}. 
  \end{displaymath}
   Now assume that the winning controller for the set $\mathcal W_i$ is
$\gamma^{(i)} = ( \gamma_1^{(i)}, \gamma_2^{(i)}, \dots, \gamma_t^{(i)}, \dots
)$. We can define the new controller $\gamma = (\gamma_1,\gamma_2,\dots)$ as
   \begin{displaymath}
     \gamma_t (\varsigma[0],e[0],\dots,e[t-1]) = \gamma_t^{(h(\varsigma[0]))}(\varsigma[0],e[0],\dots,e[t-1]).
   \end{displaymath}
   It is easily verified that $\gamma$ is a winning controller for \\ $\bigcup_{i\in\mathcal I}\mathcal W_i$.
\end{proof}
}

%Before proving Theorem~\ref{thrm:pes2cont} and \ref{thrm:cont2opt}, we need the following lemmas: 
\begin{lemma}
  For any two sequences $\sigma = \nu_0\nu_1\dots$, $\sigma' = \nu_0'\nu_1'\dots$, such that $\sigma\models\varphi$ and $\sigma'\models \varphi$, where $\varphi$ is a GR1 formula defined in \eqref{eq:gr1}, the following properties hold:
  \begin{enumerate}
    \item Define a time-shifted sequence $\sigma_t = \nu_t\nu_{t+1}\dots$, then $\sigma_t \models \varphi$. 
    \item Suppose that there exists $\tau\geq 0$, such that $\nu_\tau = \nu'_0$, then the following sequence $\nu_0\dots\nu_\tau\nu'_1\nu'_2\dots\models \varphi$.
  \end{enumerate}
  \label{lemma:timeinvariant}
\end{lemma}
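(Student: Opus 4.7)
The plan rests on a single observation: for any atomic proposition $p$, satisfaction of $\square\lozenge p$ by a sequence $\sigma=\nu_0\nu_1\dots$ depends only on whether $p$ holds at infinitely many indices of $\sigma$. I would first establish this tail-invariance explicitly from the semantics: unrolling the definitions of $\square$ and $\lozenge$, $\sigma \models \square\lozenge p$ iff for every $k\geq 0$ there exists $j\geq k$ with $\nu_j \models p$, i.e., $\{j : \nu_j \models p\}$ is infinite. In particular, modifying or deleting any finite prefix of $\sigma$ does not change the truth value of $\square\lozenge p$, and neither does changing finitely many coordinates anywhere. This principle applies to each $p_i$ and each $q_j$ in the GR1 formula \eqref{eq:gr1}.

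For part (1), I would apply tail-invariance directly. The shifted sequence $\sigma_t = \nu_t\nu_{t+1}\dots$ is obtained from $\sigma$ by deleting the finite prefix $\nu_0\dots\nu_{t-1}$, so for every $i$ and $j$,
\[
\sigma_t \models \square\lozenge p_i \iff \sigma \models \square\lozenge p_i, \qquad
\sigma_t \models \square\lozenge q_j \iff \sigma \models \square\lozenge q_j.
\]
Hence $\sigma_t$ satisfies exactly the same set of conjuncts $\{\square\lozenge p_i\}$ and $\{\square\lozenge q_j\}$ as $\sigma$. Since $\varphi$ is a Boolean combination (an implication between two conjunctions) of these conjuncts, $\sigma \models \varphi$ forces $\sigma_t \models \varphi$.

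For part (2), let $\tilde{\sigma} = \nu_0\dots\nu_\tau\nu'_1\nu'_2\dots$. Its first $\tau+1$ coordinates form a finite prefix, and its infinite tail is $\nu'_1\nu'_2\dots$, which differs from $\sigma' = \nu'_0\nu'_1\nu'_2\dots$ in at most one coordinate. By the tail-invariance principle applied twice (deletion of a finite prefix, then alteration of a single coordinate),
\[
\tilde{\sigma} \models \square\lozenge p_i \iff \sigma' \models \square\lozenge p_i,
\]
and similarly for each $q_j$. Thus $\tilde{\sigma}$ and $\sigma'$ agree on every conjunct appearing in $\varphi$, so $\tilde{\sigma} \models \varphi$ iff $\sigma' \models \varphi$, and the latter is given by hypothesis. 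The hypothesis $\sigma \models \varphi$ and the matching condition $\nu_\tau = \nu'_0$ are not actually needed beyond guaranteeing that the spliced sequence is well-defined as a valid trajectory, which is a separate concern handled elsewhere.

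The only real obstacle is the bookkeeping of LTL semantics — specifically being careful that $\sigma \models \square\lozenge p$ expands to a statement about infinitely many indices rather than merely some single index — but once this tail-invariance lemma is in hand, both parts reduce to the same one-line argument about which conjuncts of $\varphi$ are preserved.
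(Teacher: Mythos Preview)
Your proposal is correct and takes essentially the same approach as the paper: both rest on the observation that each conjunct $\square\lozenge p_i$ and $\square\lozenge q_j$ is tail-invariant (a liveness property), so any Boolean combination of them, in particular the GR1 formula, is preserved under finite prefix modifications. The paper compresses this into a single sentence by rewriting $\varphi$ as $(\neg\bigwedge_i \square\lozenge p_i)\vee(\bigwedge_j \square\lozenge q_j)$ and invoking the word ``liveness,'' whereas you spell out the infinitely-often characterization explicitly; your additional remark that the hypotheses $\sigma\models\varphi$ and $\nu_\tau=\nu'_0$ are not strictly needed for the LTL satisfaction in part~(2) is accurate and a nice observation.
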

\begin{proof}
By definition, $\sigma \models \varphi$ if and only if
\begin{equation}
 \sigma \models \left(\neg \bigwedge_{i=1}^M \square\lozenge p_i\right) \bigvee\left( \bigwedge_{j=1}^N \square\lozenge q_j\right).
 \label{eq:gr1transform}
\end{equation}
The lemma follows directly from the fact that the right hand side of \eqref{eq:gr1transform} is a liveness formula. 
\end{proof}
\begin{lemma}
  Consider an FTS $\mathbb T$ and a GR1 formula $\varphi$. If the controller $\gamma$ is winning for some non-empty set $\mathcal W$, then for any initial condition $\varsigma[0]\in \mathcal W$ and environmental actions $e[0]e[1]\dots$, the controlled execution $(\varsigma[0],e[0])(\varsigma[1],e[1])\dots$ satisfies
  \begin{displaymath}
   \varsigma[t]\in W(\mathbb T,\varphi) , \,\forall t=0,1,\dots \; .
  \end{displaymath}
  \label{lemma:setinvariant}
\end{lemma}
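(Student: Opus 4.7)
The plan is to show that for each time $t \geq 0$, the singleton $\{\varsigma[t]\}$ is itself a winning set, from which $\varsigma[t] \in W(\mathbb T,\varphi)$ follows immediately by Definition~\ref{def:largestwin}. The case $t = 0$ is trivial, since $\mathcal W$ being winning forces $\mathcal W \subseteq W(\mathbb T,\varphi)$, so the real work is in the case $t \geq 1$.

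First I would fix $t \geq 1$ and construct a candidate winning controller $\gamma'$ at $\varsigma[t]$ by ``prepending'' the already-observed environmental history $e[0],\dots,e[t-1]$ to the given winning controller $\gamma$. Concretely, for any fresh environmental sequence $e'[0]e'[1]\dots$, set
\begin{displaymath}
    \gamma'_s(\varsigma[t],e'[0],\dots,e'[s-1]) \triangleq \gamma_{t+s}(\varsigma[0], e[0], \dots, e[t-1], e'[0], \dots, e'[s-1]),
\end{displaymath}
and extend $\gamma'$ arbitrarily from other initial states. Consistency of $\gamma'$ at index $s$ reduces, via this definition, to consistency of $\gamma$ at index $t+s$ along the spliced environmental sequence $\tilde e \triangleq e[0],\dots,e[t-1],e'[0],e'[1],\dots$, and is therefore inherited from the hypothesis that $\gamma$ is consistent.

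Next I would verify that $\gamma'$ is winning at $\varsigma[t]$. Let $\tilde\sigma$ be the controlled execution from $\varsigma[0]$ produced by $\gamma$ under $\tilde e$. Because $\varsigma[0]\in\mathcal W$ and $\gamma$ is winning on $\mathcal W$, $\tilde\sigma\models\varphi$. Since $\tilde e$ agrees with the original environmental sequence on $[0,t{-}1]$, the length-$t$ prefix of $\tilde\sigma$ coincides with $(\varsigma[0],e[0])\dots(\varsigma[t-1],e[t-1])$, so the shifted suffix $\tilde\sigma_t$ begins at $(\varsigma[t],e'[0])$ and, by the defining equation of $\gamma'$, is exactly the controlled execution produced by $\gamma'$ from $\varsigma[t]$ under $e'$. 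Applying part~1 of Lemma~\ref{lemma:timeinvariant} then yields $\tilde\sigma_t\models\varphi$, so $\gamma'$ is winning at $\varsigma[t]$. Hence $\{\varsigma[t]\}$ is a winning set and $\varsigma[t]\in W(\mathbb T,\varphi)$.

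The main subtlety is really just notational book-keeping: one must carefully align the indexing of $\gamma$, $\gamma'$, and $\tilde\sigma$ so that the controlled execution of $\gamma'$ starting at $\varsigma[t]$ is identified with the $t$-shifted suffix $\tilde\sigma_t$. Once this identification is made, the conclusion is essentially a one-line application of Lemma~\ref{lemma:timeinvariant}(1) together with the maximality of $W(\mathbb T,\varphi)$; no additional game-theoretic or topological machinery is required.
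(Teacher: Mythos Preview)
Your proposal is correct and follows essentially the same route as the paper: the paper's proof is the single sentence ``This result follows directly from Lemma~\ref{lemma:timeinvariant},'' and your argument is precisely the unpacking of that sentence---time-shift the controlled execution, invoke Lemma~\ref{lemma:timeinvariant}(1), and conclude via the maximality of $W(\mathbb T,\varphi)$. The explicit construction of $\gamma'$ by prepending the observed history is the natural way to make the implicit step rigorous, and your consistency and winning checks are sound.
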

\begin{proof}
This result follows directly from Lemma~\ref{lemma:timeinvariant}.
\end{proof}
\begin{proof}[Proof of Theorem~\ref{thrm:pes2cont}]
%  Consider the transition from a region corresponding to an arbitrary time $\varsigma[t] = \gamma_{p, t-1}(\varsigma[0] \in \mathcal{S}_\text{init}, e[0], \dots, e[t])$ to $\varsigma[t+1] = \gamma_{p, t}( \varsigma[0] \in \mathcal{S}_\text{init}, e[0], \dots, e[t+1])$. By the definition of a controller, we know that $\varsigma[t] \rightarrow_p^{(i)} \varsigma[t+1]$.
%  This implies, by the construction of $\rightarrow_p^{(i)}$, that $\forall s[t] \in T^{-1}_{\mathcal{S}^{(i)}}( \, \varsigma[t] \, ), \; \exists \, u[t]$ such that $s[t+1] \in T^{-1}_{\mathcal{S}^{(i)}}( \, \varsigma[t+1] \, )$ if $u[t]$ is applied to the system at time $t$. We will use this $s[t+1]$ as our controller. Construct a continuous controller $g$ by letting $g_t( s[0] \in T^{-1}_{\mathcal{S}^{(i)}}( \varsigma[0] ), e[0], \dots, e[t+1]) = s[t+1]$.
%  If $s[0] \in T^{-1}_{\mathcal{S}^{(i)}}( \varsigma[0] \in \mathcal{S}^{(i)}_\text{init})$ then for any $t \in {1, 2, \dots}$ we have that $s[t] \in T^{-1}_{\mathcal{S}^{(i)}}( \varsigma[t] = \gamma_{p, t-1}( \varsigma[0], e[0], \dots, e[t]))$ which is winning.
By the recursive definition of $\mathbb D^{(i)}_p$ and $\mathbb D^{(i)}_o$, we know that for any $\varsigma_a,\varsigma_b\in \mathcal S^{(i)}$,
\begin{displaymath}
  (\varsigma_a,e_a)\rightarrow_p^{(i)}(\varsigma_b,e_b)
\end{displaymath}
implies that 
\begin{displaymath}
  \mathcal{R}_p(  \invT{i} ( \varsigma_a) ), \invT{i}(\varsigma_b))) = 1.
\end{displaymath}
Hence, \eqref{eq:setinclusionpes} can be proved in a similar way as Theorem~\ref{thrm:pes2cont0}.

We now prove \eqref{eq:setinclusionpes2}. For the FTS $\mathbb D_p^{(i)}$,
suppose the winning controller for $\mathcal W^{(i)} = W(\mathbb
D^{(i)}_p,\varphi)$ is $\gamma_p^{(i)}
= (\gamma_{p,1}^{(i)},\gamma_{p,2}^{(i)},\dots)$. We can define the controller
$\gamma_p^{(i+1)} = (\gamma_{p,1}^{(i+1)},\gamma_{p,2}^{(i+1)},$ $\dots)$ for
the FTS $\mathbb D_p^{(i+1)}$ as
\begin{align*}
  \gamma_{p,t}^{(i+1)}((\varsigma[0],1),&e[0],\dots,e[t-1]) \\
  &= (\gamma_{p,t}^{(i)}(\varsigma[0],e[0],\dots,e[t-1]) ,1).
\end{align*}
Thus, the controlled execution of the FTS $\Dp{i+1}$ is given by
\begin{displaymath}
  ((\varsigma[0],1),e[0]) ((\varsigma[1],1),e[1]) ((\varsigma[2],1),e[2])\dots,
\end{displaymath}
which satisfies the specification $\varphi$. Therefore, we only need to prove
that the controller $\gamma_p^{(i+1)}$ is consistent.

By Lemma~\ref{lemma:setinvariant}, we know that for any $\varsigma[0]\in \mathcal W^{(i)}$, the controlled execution $(\varsigma[0],e[0])\dots$ satisfies
\begin{displaymath}
  \varsigma[t]\in \mathcal W^{(i)},  
\end{displaymath}
which implies that the transition from $((\varsigma[t],1),e[t])$ to
$((\varsigma[t+1],1),e[t+1])$ in $\Dp{i+1}$ is a \emph{WW-transition} and hence
exists. Hence, $\gamma_p^{(i+1)}$ is consistent, which completes the proof.
\end{proof}

\begin{proof}[Proof of Theorem~\ref{thrm:cont2opt}]
  We first prove \eqref{eq:setinclusionopt2}. Notice that by the construction of $\Do{i+1}$, if $\varsigma[0] \in \mathcal L^{(i)} = L(\Do{i},\varphi)$, then $((\varsigma[0],1),e[0])$ has no successors in $\Do{i+1}$. Thus, $(\varsigma[0],1)\in L(\Do{i+1},\varphi)$ since no consistent controller exists for $(\varsigma[0],1)$.

  We now prove \eqref{eq:setinclusionopt} by induction. Notice that we \emph{cannot} use the same argument as Theorem~\ref{thrm:cont2opt0} since $s_a\rightarrow s_b$ does not necessarily imply $\T{i+1}(s_a)\rightarrow_o^{(i+1)}\T{i+1}(s_b)$.
  
  By Theorem~\ref{thrm:cont2opt0}, we know that \eqref{eq:setinclusionopt}
holds when $i = 1$. For the transition system $TS(\Sigma)$, suppose that the
controller $\gamma = (\gamma_1,\gamma_2,\dots)$ is winning for $W(TS(\Sigma),\varphi)$. For any $s[0]\in W(TS(\Sigma),\varphi)$ and environmental actions $e[0]e[1]\dots$, we create a controlled execution using $\gamma$: $\sigma = (s[0],e[0])(s[1],e[1])\dots$, which is winning.

  Let us define a hitting time $\tau$ as
  \begin{displaymath}
    \tau = \inf\{t\in\mathbb N_0:\T{i-1}(s[t])\in \mathcal W^{(i-1)}\}.
  \end{displaymath}
  In other words, $\tau$ is the first time that $\T{i-1}(s[t])$ enters the winning set $\mathcal W^{(i-1)}$. We further assume that the infimum over an empty set is $\infty$.
    
  For the FTS $\Dp{i-1}$, suppose that the controller $\gamma_p = (\gamma_{p,1},\dots)$ is winning for $W(\Dp{i-1},\varphi) = \mathcal W^{(i-1)}$. If $\tau < \infty$, we define $\varsigma_p[0] = \T{i-1}(s[\tau])$ and $e_p[t] = e[t+\tau]$. Now we create a controlled execution using $\gamma_p$ with environmental actions $e_p[0]e_p[+1]\dots$: $\sigma_p = (\varsigma_p[0],e_p[0])(\varsigma_p[1],e_p[1])\dots$, which is also winning.

  We now construct a controller $\gamma_o= (\gamma_{o,1},\dots)$ of the FTS $\Do{i}$, such that it is winning at $\varsigma[0] = \T{i}(s[0])$. The construction can by divided into two steps:
  \begin{enumerate}
    \item If $t\leq \tau$, then $\gamma_o$ follows the winning controller $\gamma$ of the FTS $TS(\Sigma)$, i.e.,
      \begin{align*}
	\gamma_{o,t}(\varsigma[0],&e[0],\dots,e[t-1]) \\
	&= \T{i}(\gamma_t(s[0],e[0],\dots,e[t-1])).
      \end{align*}
    \item If $t > \tau$, we switch to the winning controller $\gamma_p$ of the FTS $\Dp{i-1}$, i.e.,
      \begin{align*}
	\gamma_{o,t}&(\varsigma[0],e[0],\dots,e[t-1])\\
	&=(\gamma_{p,t-\tau}(\varsigma_p[0],e_p[0],\dots,e_p[t-\tau-1]),1).
      \end{align*}
  \end{enumerate}

  Now we prove that $\gamma_o$ is winning at $\varsigma[0]$. Define the controlled execution using $\gamma_o$ on the FTS $\Do{i}$ to be
  \begin{displaymath}
   \sigma_o = (\varsigma_o[0],e[0])(\varsigma_o[1],e[1])\dots \; .
  \end{displaymath}
  We need to prove that $\sigma_o$ satisfies the specification and $\gamma_o$ is consistent. The proof is divided into two cases depending on whether $\tau =\infty$ or $\tau < \infty$.

  \emph{ Case 1: $\tau = \infty$}

  By the definition of $\gamma_o$, we know that
  \begin{displaymath}
    \varsigma_o[t] = \T{i}(s[t]).
  \end{displaymath}
  Since $\sigma$ is winning, we only need to check the consistency of $\gamma_o$, i.e., whether the transition from $(\varsigma_o[t],e[t])$ to $(\varsigma_o[t+1],e[t+1])$ exists in $\Do{i}$. By Lemma~\ref{lemma:setinvariant}, we know that
  \begin{displaymath}
    s[t]\in W(TS(\Sigma),\varphi).
  \end{displaymath}
  And hence, by the induction assumption,  
  \begin{displaymath}
    \T{i-1}(s[t])\in \mathcal M^{(i-1)}\bigcup \mathcal W^{(i-1)}.
  \end{displaymath}
  By the fact that $\tau = \infty$,
  \begin{displaymath}
    \T{i-1}(s[t])\in \mathcal M^{(i-1)}.
  \end{displaymath}
  As a result, there exists an $j_t \in\{1,\dots,m\}$, such that $\varsigma_o[t]$ is the $j_t$th child of $\T{i-1}(s[t])$, i.e.,
  \begin{displaymath}
   \varsigma_o[t] = (\T{i-1}(s[t]),j_t).
  \end{displaymath}
  Furthermore, since there exists an $u[t]$, such that $f(s[t],u[t]) = s[t+1]$, we know that 
  \begin{displaymath}
    \mathcal R_o(\invT{i}(\varsigma_o[t]),\invT{i}(\varsigma_o[t+1]))=1,
  \end{displaymath}
  Hence, the transition from $(\varsigma_o[t],e[t])$ to
$(\varsigma_o[t+1],e[t+1])$ is an \emph{MM-transition} and it exists in
$\Do{i}$. And thus, $\gamma_o$ is consistent. 

  \emph{ Case 2: $\tau < \infty$}

  By the construction of $\gamma_o$, $\sigma_o$ satisfies
  \begin{displaymath}
    \varsigma_o[t] = \begin{cases}
      \T{i}(s[t])&\text{if }t\leq \tau,\\
      (\varsigma_p[t-\tau],1)&\text{if }t >  \tau.\\
    \end{cases}
  \end{displaymath}
  By Lemma~\ref{lemma:timeinvariant} and the fact that both $\sigma$ and $\sigma_p$ satisfy $\varphi$, we only need to check the consistency of $\gamma_o$, i.e., whether the transition from $(\varsigma_o[t],e[t])$ to $(\varsigma_o[t+1],e[t+1])$ exists in $\Do{i}$. This can be done in three steps:
  \begin{enumerate}
    \item $t<\tau-1$:
      
      By the same argument as for the case where $\tau = \infty$, we know that the transition from $(\varsigma_o[t],e[t])$ to $(\varsigma_o[t+1],e[t+1])$ is an \emph{MM-transition} and it exists in $\Do{i}$.
    \item $t=\tau-1$: 

      By the definition of $\tau$, we know that
      \begin{displaymath}
	\T{i-1}(s[\tau-1])\in \mathcal M^{(i-1)},\, \T{i-1}(s[\tau])\in \mathcal W^{(i-1)}.
      \end{displaymath}
      Hence, the transition from $(\varsigma_o[\tau-1],e[\tau-1])$ to $(\varsigma_o[\tau],e[\tau])$ is an \emph{MW-transition} and it exists in $\Do{i}$.
    \item $t>\tau-1$:

      By Lemma~\ref{lemma:setinvariant}, we know that
      \begin{displaymath}
	\varsigma_p[t] \in W(\Dp{i-1},\varphi) = \mathcal W^{(i-1)}.	
      \end{displaymath}
      Hence, the transition from $(\varsigma_o[t],e[t])$ to $(\varsigma_o[t+1], e[t+1])$ is a \emph{WW-transition} and it exists in $\Do{i}$. 
  \end{enumerate}
Therefore, $\gamma_o$ is consistent and we can conclude the proof.
\end{proof}
}

\end{document}